\documentclass[letterpaper,12pt]{article}

\usepackage[left=1in, top=1.5in, right=1in, bottom=1.5in]{geometry}
\usepackage{setspace}
\usepackage{appendix}

\usepackage[T1]{fontenc}
\usepackage[charter]{mathdesign}
\usepackage{microtype}
\usepackage{titling}

\pretitle{\begin{center}\large\bfseries}
\posttitle{\end{center}\vspace{-0.5em}}
\preauthor{\begin{center}\large}   
\postauthor{\end{center}\vspace{-0.5em}}
\predate{\begin{center}\small}
\postdate{\end{center}\vspace{-1em}}

\usepackage{csquotes}
\usepackage{enumerate}
\usepackage{bm}
\usepackage{dsfont}

\usepackage{amsmath}
\usepackage{amsthm}

\newtheorem{proposition}{Proposition}
\newtheorem{corollary}{Corollary}
\newtheorem{lemma}{Lemma}

\theoremstyle{definition}

\newtheorem{assumption}{Assumption}
\theoremstyle{remark}
\newtheorem{remark}{Remark}

\usepackage{booktabs}
\usepackage{makecell}
\usepackage{multirow}

\usepackage{graphicx}
\usepackage{xcolor}
\usepackage{caption}
\usepackage{subfigure}
\usepackage{tikz,pgfplots}
\usetikzlibrary{shapes,arrows,quotes}
\pgfplotsset{compat=1.18}

\definecolor{tc1}{RGB}{36, 55, 59}

\usepackage{mdframed}
\mdfdefinestyle{comment}{backgroundcolor = cyan!15, linecolor=cyan, font=\small\sffamily\slshape, fontcolor=black!75}

\usepackage[backend=biber,authordate,url=false,isbn=false,eprint=false]{biblatex-chicago}
\title{When AI Improves Answers but Slows Knowledge Creation: \\Matching and Dynamic Knowledge Creation in Digital Public Goods}
\author{Keh-Kuan Sun
    \thanks{Department of Economics, Fairfield University.  Email: ksun@fairfield.edu}
}
\date{March 31, 2026}

\begin{document}

\maketitle
\normalsize
\begin{abstract}
Generative AI helps users solve problems more efficiently, but without leaving a public trace. Fewer discussions and solutions reach public platforms, and the archives that future problem-solvers depend on can shrink. We build a dynamic model of public good provision where agents contribute by solving problems that other agents posted on a public platform, and the accumulated solutions form a depreciating public archive. 
AI reduces archive creation through two margins that require different instruments. The flow margin: the posted volume of knowledge-enhancing queries declines as AI resolves more problems privately before they reach the platform. The resolution margin: the probability that posted queries are resolved declines as AI raises contributors' outside options, thinning the contributor pool and creating congestion on the platform. The two margins interact through a self-undermining feedback that can generate low-archive traps. The decomposition yields a diagnostic prediction: in the congested regime, a joint decline in posted volume and conditional resolution requires that supply-side pool thinning is quantitatively present, whereas volume decline with stable or rising resolution indicates that private diversion alone is the dominant force. Encouraging public sharing of AI-assisted solutions offsets the decline associated with private diversion but cannot repair participation-driven deterioration in conditional resolution, which requires maintaining contributor engagement directly.
\end{abstract}

\medskip
\noindent \textbf{Keywords:} generative AI; knowledge commons; digital public goods; dynamic crowd-out; congestion matching; public logging.
\par
\noindent \textbf{JEL:} D62; D83; H41; O33.

\vfill
\pagebreak

\section{Introduction}
\label{sec:intro}
Generative AI helps users solve problems more efficiently, but without leaving a public trace. The immediate benefit is well documented: faster resolution, lower search and verification costs, and measurable productivity gains in knowledge work \autocite{brynjolfsson2025,dillon2025}. But when more problems are solved privately, fewer problem-solving episodes are converted into reusable public knowledge for others. Consider a student who raises a question in a lecture hall: the instructor answers, and every other student in the room learns something. Or consider a new hire who brings a recurring problem to a manager; after fielding the same question several times, the manager writes it up as an internal FAQ, and future employees never need to ask. In both cases, the act of asking and answering in a shared setting produces a public good that benefits people who never asked. Open knowledge platforms operate on the same principle: on Stack Overflow, for example, a posted question and its accepted answer become a public artifact that thousands of future visitors can reuse. When AI resolves that same question in a private chat, the answer may be just as good for the person who asked, but no public artifact is created, no future visitor benefits, and the platform's archive grows a little more slowly. Scale this up across millions of queries per day, and the cumulative effect on the stock of open, platform-mediated knowledge can be substantial.

We build a dynamic model of public good provision where agents contribute by solving problems that other agents posted on a public platform, and the accumulated solutions form a depreciating public archive. 
AI reduces archive creation through two margins that require different instruments. The \emph{flow margin} is the decline in the posted volume of knowledge-enhancing queries that reach the platform; the primary mechanism is private diversion, in which AI resolves problems that would otherwise have been posted publicly. The \emph{resolution margin} is the decline in the probability that posted questions get answered; the primary mechanism is contributor pool thinning, in which AI raises the return to working alone, contributors exit the platform, and the resulting congestion leaves posted questions less likely to encounter a capable answerer. Because posted flow and resolution probability are equilibrium objects, the two margins are not causally independent: private diversion also eases congestion (pushing resolution upward), and pool thinning can feed back into posting incentives. A self-undermining feedback amplifies the shrinkage: as the archive grows, AI tools that draw on it improve, diverting more questions from posting and raising the opportunity cost of contributing further.

Online Q\&A platforms offer a clean setting for studying these two margins because every interaction either produces a public artifact or does not. 
Evidence that AI is active on both margins is accumulating across platforms. \textcite{delrio-chanona2024} document a sharp decline in Stack Overflow activity relative to less exposed Q\&A platforms after the release of ChatGPT. \textcite{padilla2025} document a 20\% or larger drop in search traffic to knowledge sites as users divert queries to LLMs. \textcite{lyu2025} show that Wikipedia articles whose content overlaps with ChatGPT's capabilities experienced significant declines in viewership and editing activity relative to dissimilar articles. 
The breadth of this evidence motivates a model that isolates the mechanism at a level of generality beyond any single platform.

The model distinguishes between routine questions, which are standard and frequently asked, and knowledge-enhancing questions, whose resolution generates reusable public knowledge. AI improves private problem-solving for both types of questions, but resolves routine questions at a higher rate. This has two simultaneous effects: it frees agents from routine tasks, raising the opportunity cost of time they would otherwise spend contributing on the platform, and it changes the composition of questions that reach the platform, since routine questions are disproportionately diverted. Agents who fail to solve a problem privately decide whether to post it, trading off a posting cost against the expected value of a public answer. On the platform, posted questions are matched to contributors under a congestion technology: each contributor handles at most one question per period, so when contributors exit, each remaining question is less likely to encounter an answerer. A single domain parameter, the routine-task share $\pi$, governs both how many questions AI diverts from posting and how much AI raises the return to working alone, generating a cross-platform prediction about which types of platforms are most exposed.

The paper's main results organize around the two margins. Lemma~\ref{lem:composition} builds on \textcite{gans2024}, who argues that when AI diverts easy questions to private resolution, the remaining posted pool shifts toward harder problems and contributors become more willing to answer. In our dynamic setting, the necessary and sufficient condition for this selection logic to raise the answering cutoff is that the posted pool becomes richer in knowledge-enhancing questions. Proposition~\ref{prop:resolution} shows that when participation is endogenous, congestion from contributor pool thinning can overwhelm this composition improvement, so the sign of the resolution change is the outcome of a race among three forces: composition enrichment (raises resolution), posted-flow reduction (raises resolution by easing congestion), and pool thinning (lowers resolution). Proposition~\ref{prop:dynamic_crowdout} and Corollary~\ref{cor:decomp} establish dynamic crowd-out and decompose the decline in archive creation into the flow margin and the resolution margin. 
Proposition~\ref{prop:shrinkage} shows that a self-undermining feedback (a richer archive raises outside options, thinning the contributor pool) can generate multiple steady states, including a structural minimum viable archive below which the platform cannot sustain itself. This threshold exists in both economies; AI's effect is to compress the stable steady state inward, narrowing the viable region. Proposition~\ref{prop:conversion} shows that encouraging public sharing of AI-assisted solutions addresses the flow margin but not the resolution margin, and identifies a threshold above which the structural minimum viable archive is eliminated and the viable region expands. Across platforms, the routine-task share $\pi$ scales both margins simultaneously, making high-$\pi$ platforms doubly vulnerable.

A growing literature studies how AI reshapes the economics of knowledge production. \textcite{abis2023} argue that AI changes the knowledge production function in a manner analogous to industrialization: just as mechanization raised the returns to capital relative to labor, AI raises the returns to data relative to human judgment, transforming who produces knowledge and how. The present paper studies a downstream consequence: when AI shifts knowledge production toward private channels, the public archives that future problem-solvers depend on can shrink. 
More recently, \textcite{acemoglu2026} develop a dynamic model in which agentic AI provides context-specific recommendations that substitute for costly human learning effort; because effort jointly produces a private signal and a public signal through economies of scope, reduced effort erodes the stock of general knowledge, and when effort is sufficiently elastic the economy can tip into a knowledge-collapse steady state. The instrument that emerges from their analysis is garbling (deliberately degrading AI accuracy to incentivize human effort). 
Our two-margin separation leads to a different instrument: encouraging public sharing of AI-assisted solutions, which preserves full AI capability while converting private resolutions into public artifacts. 
The frameworks are complementary: they provide equilibrium welfare analysis with optimal information design that the present partial-equilibrium model does not attempt, while the two-margin decomposition here identifies platform-level mechanisms that their aggregate framework does not separate. We discuss a more detailed comparison in Section~\ref{sec:discussion}.

The rest of the paper is organized as follows. Section~\ref{sec:model} introduces the model. Section~\ref{sec:main_results} derives the two margins: the composition effect from selective diversion (Lemma~\ref{lem:composition}), the race that determines the sign of the resolution change (Proposition~\ref{prop:resolution}), dynamic crowd-out (Proposition~\ref{prop:dynamic_crowdout}), and the decomposition into the flow and resolution margins (Corollary~\ref{cor:decomp}). Section~\ref{sec:shrinkage} studies the self-undermining feedback and establishes the multiple-steady-state result (Proposition~\ref{prop:shrinkage}). Section~\ref{sec:logging} analyzes public sharing of AI-assisted solutions and its limits (Proposition~\ref{prop:conversion}). Section~\ref{sec:discussion} discusses scope and cross-channel generalization.

\section{Model}
\label{sec:model}
Time is discrete, $t=0,1,2,\dots$. The platform maintains a public archive stock $K_t\ge 0$, which depreciates at rate $\lambda\in(0,1)$ per period and lowers the cost of producing public answers. Depreciation is interpreted as \emph{effective archive depreciation}: obsolescence (APIs and libraries update), link rot, outdated solutions, moderation and closure, and decay in search visibility that reduces reuse of old answers. The core dynamic object is $h(K)-\lambda K$, where $h(K)$ is expected new public knowledge created at state $K$; steady states satisfy $h(K)=\lambda K$ and provide a ``platform viability'' benchmark.

\subsection{Domain and agents}
\label{sec:domain}
The platform serves a knowledge domain characterized by a routine-task share $\pi\in(0,1)$. A fraction $\pi$ of problems in the domain are routine (type~$L$): standard lookups, well-documented procedures, frequently asked questions. For instance, looking up the syntax for a specific function in a difference-in-differences package, or finding the correct option to cluster standard errors in a regression command. The remaining fraction $1-\pi$ are knowledge-enhancing (type~$H$): novel problems whose resolution generates reusable public knowledge. For instance, working out a credible identification strategy for a new empirical setting, or diagnosing why a particular estimator behaves unexpectedly when treatment timing is staggered. The parameter $\pi$ is a property of the domain (e.g., introductory programming vs.\ research mathematics), not of individual agents.\footnote{In domains where users and contributors are drawn from distinct populations (e.g., students asking and professors answering), the routine-task share relevant to query composition and the one relevant to contributors' outside options may differ. This can be accommodated by introducing a separate $\pi_c$ for the contributor side. The baseline assumes a self-serving community where agents are simultaneously users and potential contributors, so a single $\pi$ governs both sides; this is the empirically relevant case for platforms like Stack Overflow and for within-firm knowledge bases where employees both ask and answer.}

There is a unit mass of agents, each indexed by persistent ability $\alpha_i\sim\Psi$ with density $\psi>0$ on $[\underline\alpha,\bar\alpha]$. Each period, each agent faces a task drawn from the domain: type~$L$ with probability $\pi$, type~$H$ with probability $1-\pi$. This generates a unit mass of queries. Within type $\theta\in\{L,H\}$, a query draws residual difficulty $r\sim G_\theta$ with density $g_\theta>0$ on $[r_L,r_H]$. Residual difficulty governs amenability to private resolution (search engines, AI, documentation); it does not determine the cost of answering on the public platform, which depends on the idiosyncratic match quality between question and answerer (Section~\ref{sec:answering}). Each agent is also a potential contributor: after facing their own task, they may be matched to someone else's posted query on the platform.

The routine-task share $\pi$ enters the model through two channels. First, it determines query composition: fraction $\pi$ of arriving queries are type~$L$ (which do not expand the archive) and fraction $1-\pi$ are type~$H$ (which do). Second, it governs the magnitude of the outside-option shift when AI arrives: in high-$\pi$ domains, AI automates a larger share of agents' own workloads, freeing more time for own knowledge-enhancing work and raising the opportunity cost of platform volunteering. Both channels imply that platforms in high-$\pi$ domains are more vulnerable to AI-induced archive decline.

\subsection{Outside options and participation}
\label{sec:participation}
Each agent $i$ has a per-period outside-option payoff $w^e(\alpha_i,K;\pi)$, representing the total return from devoting the period to own work instead of platform volunteering. The function $w^e$ depends on ability $\alpha_i$, the archive stock $K$, the domain's routine-task share $\pi$, and the environment $e\in\{\mathrm{HO},\mathrm{AI}\}$. It is continuously differentiable with four maintained properties:
\begin{enumerate}[(i)]
\item $w_\alpha^e>0$: higher-ability agents earn more from own work.
\item $w_K^e>0$: a richer public archive improves retrieval-augmented and AI-assisted tools that complement own work.
\item $w^{\mathrm{AI}}(\alpha,K;\pi)\ge w^{\mathrm{HO}}(\alpha,K;\pi)$ for all $(\alpha,K)$: AI weakly improves outside options.
\item $w^{\mathrm{AI}}(\alpha,K;\pi)-w^{\mathrm{HO}}(\alpha,K;\pi)$ is increasing in $\pi$: the AI-induced outside-option shift is larger in high-$\pi$ domains.
\end{enumerate}
The shift $w^{\mathrm{AI}}-w^{\mathrm{HO}}$ captures the net effect of AI on the opportunity cost of volunteering; in high-$\pi$ domains, AI automates a larger share of agents' routine workloads, freeing time that agents redirect to own knowledge-enhancing work. The reduced-form specification encodes this time-reallocation logic in the maintained comparative statics without tracking time budgets explicitly.\footnote{An explicit continuous time-allocation model would derive the $\pi$-scaling structurally and characterize whether the outside-option shift is concave or convex in $\pi$, but it would require a continuous time-choice margin (how much time to devote to volunteering vs.\ own work), which substantially complicates the participation structure. The reduced-form approach delivers the qualitative comparative statics that the cross-platform predictions require; a structural time-allocation extension would be needed to quantify the $\pi$-vulnerability prediction rather than merely sign it.}

Let $S^e(K)$ denote the expected per-period payoff from platform volunteering at archive state $K$ in environment $e$ (derived from the answering game below; this payoff does not depend on $\alpha$ because ability affects the outside option, not on-platform returns). Agent $i$ participates on the platform iff $S^e(K)\ge w^e(\alpha_i,K;\pi)$. Because $w^e$ is increasing in $\alpha$, there exists a unique participation cutoff $\alpha^{*,e}(K)$ defined by
\begin{equation}
S^e(K)=w^e(\alpha^{*,e}(K),K;\pi).
\label{eq:participation_cutoff}
\end{equation}
Agents with $\alpha_i\le \alpha^{*,e}(K)$ participate; those with $\alpha_i>\alpha^{*,e}(K)$ exit to own work. The measure of participating agents is $\Psi(\alpha^{*,e}(K))$.

AI shifts the participation cutoff downward through two reinforcing channels: the direct channel ($w^{\mathrm{AI}}>w^{\mathrm{HO}}$ raises the outside option for all agents, making the participation condition harder to satisfy) and a feedback channel ($w_K^e>0$ means that a richer archive further raises outside options, amplifying exit pressure).

\subsection{Private resolution and endogenous posting}
\label{sec:posting}
In each environment $e\in\{\mathrm{HO},\mathrm{AI}\}$, an agent whose task becomes a query first attempts private resolution using an outside option. In the pre-AI benchmark ($\mathrm{HO}$), the outside option captures search engines, documentation, coworkers, and other off-platform resources; in the AI regime ($\mathrm{AI}$), the outside option is stronger because it includes an LLM front end in addition to baseline resources. Private resolution succeeds for a type-$\theta$ query whenever
\begin{equation}
r\le \bar r_\theta^{e}(K),
\label{eq:private_threshold}
\end{equation}
where $\bar r_\theta^{e}(K)$ is an environment-specific outside-option reliability cutoff that may depend on the archive stock $K$. We interpret $\bar r_\theta^{e}(K)$ as an \emph{effective reliability} cutoff: higher hallucination or miscalibration risk lowers the effective cutoff, equivalently lowering the user's expected value from relying on private resolution \autocite{gans2026a}. The probability of private resolution in environment $e$ is $a_\theta^{e}(K)\equiv G_\theta(\bar r_\theta^{e}(K))$. AI is modeled as a shift in outside-option quality: $a_\theta^{\mathrm{AI}}(K)\ge a_\theta^{\mathrm{HO}}(K)$ for the relevant tasks.\footnote{The baseline specifies $a_\theta^e(K)$ as independent of agent ability $\alpha$. Appendix~\ref{app:ability_dependent} extends the model to ability-dependent private resolution, $a_\theta^e(K,\alpha)$ increasing in $\alpha$, and shows that the main comparative statics carry through.}

If private resolution fails ($r>\bar r_\theta^{e}(K)$), the agent decides whether to post the query to the public platform. Posting entails a query-type-specific cost $d\sim\Gamma_\theta$ with CDF $\Gamma_\theta:[0,\infty)\to[0,1]$; this cost captures time to format a question, privacy concerns, stigma, and verification costs \autocite{kummer2020}. The agent's expected benefit from posting is
\begin{equation}
U_\theta^{e}(K)\equiv \sigma^e(K)\cdot V_\theta,
\label{eq:posting_benefit}
\end{equation}
where $V_\theta>0$ is the private value of obtaining a public answer and $\sigma^e(K)\in[0,1]$ is the lifetime resolution probability of a posted query at state $K$ in environment $e$ (derived from the answering game below). The agent posts iff $U_\theta^{e}(K)\ge d$, implying an endogenous escalation probability
\begin{equation}
m_\theta^{e}(K)\equiv \Gamma_\theta\!\big(U_\theta^{e}(K)\big)\in[0,1],
\label{eq:m_endog}
\end{equation}
and therefore a posted flow of type-$\theta$ queries
\begin{equation}
q_\theta^{e}(K)\equiv m_\theta^{e}(K)\big[1-a_\theta^{e}(K)\big].
\label{eq:q_posted}
\end{equation}
The total posted flow is $Q^e(K)=\pi\,q_L^{e}(K)+(1-\pi)\,q_H^{e}(K)$. The object $q_H^{e}(K)$ is the \emph{matching-to-the-commons margin}: it is the flow of knowledge-enhancing problems that the public platform has a chance to work on, equal to the fraction of type-$H$ queries that fail private resolution times the fraction of those failures that are escalated to the platform.

\subsection{Answering on the public platform}
\label{sec:answering}
A posted query remains active for $T\ge 1$ periods ($T$ finite); the baseline analysis sets $T=1$, so that a query is either resolved within the period it is posted or it expires (Appendix~\ref{app:T_extension} discusses the general case $T>1$, where the qualitative results carry through). Each participating agent can process at most one query per period. With $\Psi(\alpha^{*,e}(K))$ participating agents and $Q^e(K)$ posted queries (under $T=1$, this is also the stock of active queries), the probability that a given active query is matched to an agent is
\begin{equation}
\mu^e(K)=\min\!\bigg\{1,\;\frac{\Psi(\alpha^{*,e}(K))}{Q^e(K)}\bigg\}.
\label{eq:mu}
\end{equation}
When the contributor pool is large relative to posted flow ($\Psi(\alpha^{*,e})\ge Q^e$), every query is matched and $\mu=1$; when the pool is small ($\Psi(\alpha^{*,e})<Q^e$), queries compete for scarce contributor attention and $\mu<1$. This capacity constraint is what makes pool thinning mechanically lower resolution.\footnote{The $\min\{1,\,\Psi/Q\}$ form is a short-side (Leontief) matching rule that assumes all capacity is perfectly allocated to queries without search frictions. Replacing it with $M(\Psi,Q)/Q$ for a standard aggregate matching function $M$ would add search frictions, ensuring some queries go unmatched even when $\Psi>Q$. The qualitative comparative statics are unchanged: pool thinning lowers matching rates under any matching technology with $M_\Psi>0$. See Appendix~\ref{app:remarks} for further discussion.} If no queries are posted ($Q^e(K)=0$), we define $\mu^e(K)=0$, $\sigma^e(K)=0$, $\bar\Delta^e(K)=0$, and $S^e(K)=0$.

Conditional on being matched, the assigned agent draws an i.i.d.\ answering cost $c\sim F$ on $[0,\bar c]$, independent of the agent's ability and of the query's type and residual difficulty. The cost $c$ measures idiosyncratic match quality between agent and query: how well the agent's specific knowledge, tooling, and context fit the posted question, which is conceptually distinct from residual difficulty $r$ that measures amenability to private resolution.\footnote{The assumption that $c$ is drawn from the same distribution $F$ regardless of query type is a substantive modeling choice; Appendix~\ref{app:type_independence} discusses the interpretation and extensions with type-indexed costs.}
For example, a question about a deprecated COBOL library may have low residual difficulty $r$ (the problem itself is straightforward if you know the library) but draw high $c$ if matched to a leading contemporary engineer who has never touched COBOL, and low $c$ if matched to a retired developer who maintained that library. Conversely, a technically deep question about concurrent data structures may have high $r$ (genuinely novel, not covered by existing documentation or AI) yet draw low $c$ for a systems programmer who works on exactly that problem.

The matched agent observes the posted query but does not condition their answering decision on its type classification. Under $T=1$, the query expires at the end of the period, so the agent faces a one-shot answering decision with no option value of waiting. If the agent answers, they incur cost $C(K)+c$ (where $C(K)$ is a state-dependent cost shifter with $C'(K)<0$, capturing that a richer archive makes answering cheaper) and receive $\beta\bar\Delta^e(K)+u$, where $u\ge 0$ is a private benefit from answering (intrinsic motivation, reciprocity, reputation) and $\beta\ge 0$ is the agent's concern for the archive expansion their answer generates. The expected knowledge increment from answering a randomly drawn posted query is
\begin{equation}
\bar\Delta^e(K)=\frac{(1-\pi)\,q_H^{e}(K)\cdot\Delta}{Q^e(K)},
\label{eq:delta_bar}
\end{equation}
which equals $\Delta>0$ (the knowledge increment from a resolved type-$H$ query) times the probability that the posted query is type~$H$. The parameter $\beta$ captures general concern for the growth of the public archive: professional norms, community identity, internalized social value, or direct consumption value from a richer commons. Because the archive is a non-excludable public good in a continuum of agents, $\beta$ should be interpreted as an internalized or perceived normative value of contributing, not as a rational calculation of marginal impact on the aggregate state.

The agent answers iff the net benefit is non-negative, yielding an answering cutoff in closed form:
\begin{equation}
c^{*,e}(K)=\max\!\big\{0,\;\beta\bar\Delta^e(K)+u-C(K)\big\}.
\label{eq:cutoff}
\end{equation}
Under $T=1$, the lifetime resolution probability of a posted query coincides with the per-period resolution hazard:
\begin{equation}
\sigma^e(K)=\mu^e(K)\cdot F\!\big(c^{*,e}(K)\big).
\label{eq:sigma}
\end{equation}
Two factors determine $\sigma$: whether the query is matched to a contributor (extensive margin, $\mu^e$) and whether the matched contributor answers (intensive margin, $F(c^{*,e})$). AI can reduce $\sigma^e(K)$ through pool thinning: when $\alpha^{*,e}(K)$ falls, $\Psi(\alpha^{*,e}(K))$ falls, which lowers $\mu^e(K)$ if the platform is capacity-constrained.

\subsection{Participation surplus}
\label{sec:surplus}
A participating agent is matched to a posted query with probability $\min\{1,\,Q^e(K)/\Psi(\alpha^{*,e}(K))\}$ per period: when $Q\ge\Psi$, every participant is matched; when $Q<\Psi$, queries are scarce and not all participants receive one. Conditional on being matched, the agent's expected surplus from the answering opportunity is
\begin{equation}
\Pi^e(K)=\int_0^{c^{*,e}(K)}\!\big(c^{*,e}(K)-c\big)\,dF(c),
\label{eq:Pi}
\end{equation}
which equals zero when $c^{*,e}(K)=0$ (no agent answers). The expected per-period payoff from participation is
\begin{equation}
S^e(K)=\min\!\bigg\{1,\;\frac{Q^e(K)}{\Psi(\alpha^{*,e}(K))}\bigg\}\cdot\Pi^e(K).
\label{eq:S}
\end{equation}
The ratio $Q^e/\Psi$ in \eqref{eq:S} is inverted relative to the match probability $\mu^e=\min\{1,\,\Psi/Q^e\}$: $\mu$ measures the probability a posted query encounters a contributor, while $\min\{1,\,Q^e(K)/\Psi(\alpha^{*,e}(K))\}$ is the
probability a participating agent is matched to a query. The surplus $S^e$ does not depend on $\alpha$: conditional on participating, all agents face the same matching and answering game. The non-excludable archive benefit ($K$ improves everyone's tools) accrues to all agents regardless of participation and therefore does not enter the participation comparison. Heterogeneity in the participation decision is driven entirely by heterogeneity in outside options $w^e(\alpha,K;\pi)$.

\subsection{Knowledge accumulation and timing}
\label{sec:accumulation}
Public knowledge accumulates only from knowledge-enhancing (type-$H$) problems whose solutions are publicly logged. In the human-only benchmark, this occurs only when a posted type-$H$ query is resolved on the platform. In the AI regime, public knowledge can additionally grow when a privately produced AI solution is converted into a public artifact at rate $\eta\in[0,1]$: when private resolution succeeds for a type-$H$ query, the solution enters the public archive with probability $\eta$. Expected new public knowledge at state $K$ is
\begin{equation}
h^{\mathrm{HO}}(K)=\Delta(1-\pi)\,q_H^{\mathrm{HO}}(K)\,\sigma^{\mathrm{HO}}(K),
\label{eq:h_ho}
\end{equation}
and
\begin{equation}
h^{\mathrm{AI}}(K;\eta)=\Delta(1-\pi)\Big[q_H^{\mathrm{AI}}(K)\,\sigma^{\mathrm{AI}}(K)+\eta\,a_H^{\mathrm{AI}}(K)\Big].
\label{eq:h_ai}
\end{equation}
The first term is human-generated public knowledge from posted queries resolved on the platform; the second term is privately resolved AI knowledge converted to the commons. The public archive stock evolves according to
\begin{equation}
K_{t+1}=(1-\lambda)K_t+h^e(K_t;\eta).
\label{eq:lom}
\end{equation}
A steady state $K^e$ satisfies $h^e(K^e;\eta)=\lambda K^e$. Define average creation $\phi^{\mathrm{AI}}(K)\equiv h^{\mathrm{AI}}(K;0)/K$ for $K>0$. The decomposition that anchors the paper is immediate from \eqref{eq:h_ai}: in the AI regime, created public knowledge equals posted flow of knowledge-enhancing queries times resolution conditional on posting, plus a conversion term for private AI successes. The benchmark case $\eta=0$, in which private AI resolutions do not enter the public archive, isolates the crowd-out mechanism; Section~\ref{sec:logging} studies how increasing $\eta$ shifts creation and changes long-run outcomes.

The within-period timing is as follows. In Stage~0 (participation), each agent observes $K_t$ and the environment $e$ and participates iff $S^e(K_t)\ge w^e(\alpha_i,K_t;\pi)$, determining the contributor pool $\Psi(\alpha^{*,e}(K_t))$. In Stage~1 (query arrival, private resolution, posting), each agent draws a task, attempts private resolution, and decides whether to post, determining posted flows $q_L^e$, $q_H^e$, and total flow $Q^e$. In Stage~2 (matching and answering), active queries are matched to participating agents under the congestion technology, matched agents draw $c\sim F$ and answer iff $c\le c^{*,e}(K_t)$, and resolved queries are publicly logged. In Stage~3 (accumulation), resolved type-$H$ queries contribute $\Delta$ to $K$; under AI with $\eta>0$, each privately resolved type-$H$ query contributes $\Delta$ with probability $\eta$; and the archive updates to $K_{t+1}$.

\subsection{Period equilibrium}
\label{sec:equilibrium}
Given $(K,e)$, a period equilibrium is a tuple $(\alpha^*,\,c^*,\,\sigma,\,\mu,\,q_L,\,q_H,\,Q,\,S,\,\bar\Delta)$ satisfying the following conditions simultaneously:
\begin{enumerate}[(E1)]
\item \emph{Matching probability.} $\mu^e(K)=\min\{1,\,\Psi(\alpha^{*,e}(K))/Q^e(K)\}$, with $\mu^e=0$ if $Q^e=0$.
\item \emph{Posted-pool composition.} $\bar\Delta^e(K)=(1-\pi)\,q_H^{e}(K)\cdot\Delta/Q^e(K)$ if $Q^e>0$; $\bar\Delta^e=0$ if $Q^e=0$.
\item \emph{Answering cutoff.} $c^{*,e}(K)=\max\{0,\;\beta\bar\Delta^e(K)+u-C(K)\}$.
\item \emph{Lifetime resolution probability.} $\sigma^e(K)=\mu^e(K)\cdot F(c^{*,e}(K))$.
\item \emph{Posting benefit and escalation.} $U_\theta^{e}(K)=\sigma^e(K)\cdot V_\theta$ and $m_\theta^{e}(K)=\Gamma_\theta(U_\theta^{e}(K))$ for $\theta\in\{L,H\}$.
\item \emph{Posted flows.} $q_\theta^{e}(K)=m_\theta^{e}(K)\cdot[1-a_\theta^{e}(K)]$ and $Q^e(K)=\pi\,q_L^{e}(K)+(1-\pi)\,q_H^{e}(K)$.
\item \emph{Participation surplus.} $S^e(K)=\min\{1,\,Q^e(K)/\Psi(\alpha^{*,e}(K))\}\cdot\Pi^e(K)$.
\item \emph{Participation cutoff.} $S^e(K)=w^e(\alpha^{*,e}(K),K;\pi)$.
\end{enumerate}
The answering cutoff $c^{*,e}$ is in closed form given $\bar\Delta^e$ and $K$. The remaining fixed point has a nested structure. The \emph{inner loop}, given $\alpha^*$ (and hence the contributor pool $\Psi(\alpha^*)$), solves for posted flows $(q_L,q_H)$: these depend on $\sigma^e$, which depends on $\mu^e=\min\{1,\,\Psi(\alpha^*)/Q^e\}$ and on $c^{*}=\max\{0,\,\beta\bar\Delta^e+u-C(K)\}$, both of which depend on $Q^e$ through $\bar\Delta^e$ and the congestion ratio. This is a fixed point in $(q_L,q_H)$ at given $\alpha^*$. The \emph{outer loop} finds $\alpha^*$ such that $S^e(K)=w^e(\alpha^*,K;\pi)$, where $S^e$ is computed from the inner-loop solution. A steady state additionally requires $h^e(K^e;\eta)=\lambda K^e$, which pins down the long-run archive stock.

The inner loop can admit multiple fixed points in $(q_L,q_H)$ for a given $\alpha^*$, through self-reinforcing expectations about posted-pool composition: if agents expect the posted pool to be $H$-rich, the answering cutoff $c^{*}$ is high, which raises $\sigma$, which encourages posting, and if the benefit from public answers is higher for $H$-type queries ($V_H>V_L$ with $\Gamma_H$ sufficiently elastic), the additional posting is disproportionately $H$-type, confirming the high $\bar\Delta^e$. This within-period coordination problem is conceptually distinct from the across-period accumulation dynamics in $K$ that drive the main results; where relevant, we select the Pareto-dominant inner-loop fixed point. Appendix~\ref{app:remarks} provides formal conditions under which the inner loop has a unique solution. For
the dynamic results in Sections~\ref{sec:main_results}--\ref{sec:logging}, we maintain that the selected branch of inner-loop equilibria varies continuously with $K$. Under the uniqueness conditions in Appendix~\ref{app:remarks}, this continuity follows from the primitives by the implicit function theorem; when the inner loop is multiple, it is a maintained property of the selection rule.

When multiplicity arises, the alternative stable inner-loop fixed points can be interpreted as more engaged and more disengaged platform configurations, with different posted-pool composition and resolution outcomes. This creates within-period coordination fragility: adverse shifts in participation can move the platform to a lower-resolution branch, reducing current archive creation. In turn, that fragility can interact with the self-undermining dynamic of Section~\ref{sec:shrinkage} by making the path of archive accumulation depend on inner-loop branch selection.

\section{AI and the Two Margins}
\label{sec:main_results}
This section derives the paper's main comparative statics under the benchmark case $\eta=0$, in which private AI resolutions do not enter the public archive. We focus on type-$H$ queries, since only $H$ contributes to the archive stock. 
The results are organized around the decomposition in \eqref{eq:h_ai}: AI affects archive creation $h^{\mathrm{AI}}(K;0)=\Delta(1-\pi)\,q_H^{\mathrm{AI}}(K)\, \sigma^{\mathrm{AI}}(K)$ through two equilibrium objects: the posted flow of knowledge-enhancing queries that reach the platform, $q_H^{e}(K)$ (the \emph{flow margin}), and the resolution probability conditional on posting, $\sigma^e(K)$ (the \emph{resolution margin}). Because both are equilibrium objects, they are not causally independent: $q_H^e = m_H^e(1-a_H^e)$ depends on the resolution probability $\sigma^e$ through the endogenous posting decision $m_H^e$, and $\sigma^e$ depends on posted flow through the congestion ratio $\Psi/Q^e$. The decomposition is therefore an accounting separation of archive creation into posted flow and conditional resolution, not a one-for-one mapping between causal mechanisms and observable objects. The primary mechanism behind the flow margin is private diversion (AI resolves queries off-platform); the primary mechanism behind the resolution margin is pool thinning (contributor exit reduces matching capacity). 
We begin with a composition result that isolates one channel through which AI can improve the intensive margin of resolution, and then show that a competing extensive-margin channel (pool thinning through congestion) can overwhelm it.

\subsection{Composition of the posted pool}
\label{sec:composition}
The answering cutoff $c^{*,e}(K)=\max\{0,\,\beta\bar\Delta^e(K)+u-C(K)\}$ from \eqref{eq:cutoff} depends on the environment $e$ only through the expected knowledge increment $\bar\Delta^e(K)$, which in turn depends on the $H$-share of the posted pool. If AI changes what is posted, it changes the incentive to answer, even holding the contributor pool fixed. The following result isolates this composition channel.

\begin{lemma}[Composition of the posted pool and the answering cutoff]
\label{lem:composition}
Fix $K$. The expected knowledge increment satisfies
\begin{equation}
\bar\Delta^e(K)=\Delta\cdot\frac{(1-\pi)\,q_H^{e}(K)}{\pi\,q_L^{e}(K)+(1-\pi)\,q_H^{e}(K)},
\label{eq:delta_bar_ratio}
\end{equation}
which is increasing in the ratio $q_H^{e}/q_L^{e}$. Therefore $c^{*,\mathrm{AI}}(K)\ge c^{*,\mathrm{HO}}(K)$ iff the posted pool is weakly more $H$-rich under AI:
\begin{equation}
\frac{q_H^{\mathrm{AI}}(K)}{q_L^{\mathrm{AI}}(K)}\ge \frac{q_H^{\mathrm{HO}}(K)}{q_L^{\mathrm{HO}}(K)}.
\label{eq:h_rich}
\end{equation}
A sufficient condition, holding escalation probabilities fixed ($m_\theta^{\mathrm{AI}}=m_\theta^{\mathrm{HO}}$ for both $\theta$), is that AI disproportionately resolves routine queries privately:
\begin{equation}
\frac{1-a_H^{\mathrm{AI}}(K)}{1-a_H^{\mathrm{HO}}(K)}>\frac{1-a_L^{\mathrm{AI}}(K)}{1-a_L^{\mathrm{HO}}(K)},
\label{eq:diff_capability}
\end{equation}
i.e., the private-failure rate for $H$-type queries declines proportionally less than for $L$-type queries. This holds when AI is relatively more capable at routine tasks than at knowledge-enhancing ones.
\end{lemma}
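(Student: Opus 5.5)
The plan is to reduce everything to a single scalar, the posted-pool ratio $\rho^e(K)\equiv q_H^e(K)/q_L^e(K)$, and then use the fact that at fixed $K$ the answering cutoff depends on the environment only through $\bar\Delta^e(K)$.

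First, substitute $Q^e=\pi q_L^e+(1-\pi)q_H^e$ from (E6) into \eqref{eq:delta_bar}; this gives \eqref{eq:delta_bar_ratio} directly. When $q_L^e>0$, divide numerator and denominator by $q_L^e$ to write
\[
\bar\Delta^e=\Delta\,\frac{(1-\pi)\rho^e}{\pi+(1-\pi)\rho^e}.
\]
The map $\rho\mapsto(1-\pi)\rho/(\pi+(1-\pi)\rho)$ has derivative $\pi(1-\pi)/(\pi+(1-\pi)\rho)^2>0$ for $\pi\in(0,1)$. Hence $\bar\Delta^e$ is strictly increasing in $\rho^e$. The boundary cases are handled by convention. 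If $q_L^e=0<q_H^e$, set $\rho^e=\infty$ and $\bar\Delta^e=\Delta$, which is the supremum of the map. If $Q^e=0$, then $\bar\Delta^e=0$ by (E2). I would state these conventions explicitly so that the ratio comparison \eqref{eq:h_rich} is well defined.

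Second, compare cutoffs across environments. Because $K$ is fixed, the terms $u$ and $C(K)$ in \eqref{eq:cutoff} are common to HO and AI. So $c^{*,e}=\max\{0,\beta\bar\Delta^e+u-C(K)\}$ is a weakly increasing function of $\bar\Delta^e$, and strictly increasing whenever $\beta>0$ and the cutoff is interior. Composing with the first step:
\begin{itemize}
\item \eqref{eq:h_rich} implies $\bar\Delta^{\mathrm{AI}}\ge\bar\Delta^{\mathrm{HO}}$, which implies $c^{*,\mathrm{AI}}\ge c^{*,\mathrm{HO}}$.
\item For the converse, suppose $\rho^{\mathrm{AI}}<\rho^{\mathrm{HO}}$. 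Then $\bar\Delta^{\mathrm{AI}}<\bar\Delta^{\mathrm{HO}}$, and with $\beta>0$ and $\beta\bar\Delta^{\mathrm{HO}}+u-C(K)>0$ this gives $c^{*,\mathrm{AI}}<c^{*,\mathrm{HO}}$.
\end{itemize}
This step is the main obstacle, and I expect to handle it by qualification rather than by further computation. The ``iff'' is literally true only where the cutoff responds to composition, that is, $\beta>0$ and an interior cutoff at least under HO. If $\beta=0$, or both cutoffs are censored at $0$, the cutoffs are equal whatever the ratios are. I would make this qualification explicit, noting that the forward direction holds unconditionally.

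Third, for the sufficient condition, use (E6): $q_\theta^e=m_\theta^e(1-a_\theta^e)$. Then
\[
\frac{\rho^{\mathrm{AI}}}{\rho^{\mathrm{HO}}}=\frac{m_H^{\mathrm{AI}}/m_H^{\mathrm{HO}}}{m_L^{\mathrm{AI}}/m_L^{\mathrm{HO}}}\cdot\frac{(1-a_H^{\mathrm{AI}})/(1-a_H^{\mathrm{HO}})}{(1-a_L^{\mathrm{AI}})/(1-a_L^{\mathrm{HO}})}.
\]
Holding escalation probabilities fixed, the first factor equals one. Condition \eqref{eq:diff_capability} then makes the second factor exceed one, so \eqref{eq:h_rich} holds strictly and the first two steps apply. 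This assumes positive private-failure rates and positive $m_\theta$, so that all ratios are defined.

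Finally, I would note that \eqref{eq:diff_capability} is equivalent to the proportional decline in private failures being smaller for $H$ than for $L$. This is the formal content of ``AI is relatively more capable at routine tasks.'' I would also stress that the lemma is a pure partial-equilibrium accounting statement at fixed $K$ and fixed $m_\theta$. It does not use the fixed point in (E1)--(E8).
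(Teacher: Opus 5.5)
Your proposal is correct and follows the paper's proof essentially step for step: the $H$-share is monotone in $q_H^e/q_L^e$, the cutoff $c^{*,e}$ is monotone in $\bar\Delta^e$, and under fixed escalation the factorization $q_H^e/q_L^e=(m_H/m_L)(1-a_H^e)/(1-a_L^e)$ reduces the sufficient condition to \eqref{eq:diff_capability}. Your qualification of the converse is well taken. The paper infers the ``iff'' from the \emph{weak} monotonicity of $c^{*,e}$ in $\bar\Delta^e$, but that only gives the ``if'' direction. The ``only if'' direction genuinely needs $\beta>0$ and an interior HO cutoff, exactly as you state; for example, both cutoffs are $0$ near $K=0$ under Assumption~\ref{ass:phi_shape}(a).
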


The expression for $\bar\Delta^e$ is $\Delta$ times the $H$-share of the posted pool. Under fixed escalation, $q_\theta^e=m_\theta\cdot(1-a_\theta^e)$, so $q_H^e/q_L^e=(m_H/m_L)\cdot(1-a_H^e)/(1-a_L^e)$; with $m_H/m_L$ held fixed, this ratio rises under AI iff \eqref{eq:diff_capability} holds. The sufficient condition is a partial-equilibrium statement that isolates the direct composition effect of AI's differential capability: by holding escalation fixed, it strips out the indirect channel through which AI changes resolution $\sigma^e$, which in turn changes posting incentives $m_\theta^e$. This is the natural benchmark because the escalation response operates through the same resolution probability that Proposition~\ref{prop:resolution} analyzes separately. In equilibrium, $m_\theta^e$ depends on $\sigma^e$, which depends on the participation cutoff and on $c^{*}$ itself; if AI's differential capability across types is strong enough, however, it enriches the posted pool even after accounting for this equilibrium feedback.

Lemma~\ref{lem:composition} connects to the selection logic emphasized by \textcite{gans2024}: when AI diverts easy (routine) queries to private resolution, the posted pool shifts toward harder, knowledge-enhancing problems, raising the expected public-good benefit from answering and thereby increasing willingness to answer. The composition channel here operates through the answering cutoff $c^{*}$ (intensive margin), isolated from a separate congestion channel that operates through the match probability $\mu$ (extensive margin); the interaction between the two is the subject of Proposition~\ref{prop:resolution}. The composition channel requires $\beta>0$: when agents do not value archive expansion ($\beta=0$), the answering cutoff $c^{*}=\max\{0,\,u-C(K)\}$ is independent of the posted-pool composition, and force~(i) in Proposition~\ref{prop:resolution} is absent.

\subsection{Resolution: composition vs.\ congestion}
\label{sec:resolution}
Lemma~\ref{lem:composition} shows that AI can raise the answering cutoff through posted-pool composition. Whether this translates into higher resolution probability depends on the extensive margin: the match probability $\mu^e(K)$ that determines whether a posted query encounters a contributor at all. In the congested regime ($\Psi(\alpha^{*,e})<Q^e$), both the composition and the congestion channels are active, and the sign of $\sigma^{\mathrm{AI}}-\sigma^{\mathrm{HO}}$ is the outcome of a race.

\begin{proposition}[Resolution: composition vs.\ congestion]
\label{prop:resolution}
Fix $K$ and suppose the platform is congested in both environments: $\Psi(\alpha^{*,e}(K))<Q^e(K)$ for $e\in\{\mathrm{HO},\mathrm{AI}\}$. Under $T=1$, the lifetime resolution probability is
\begin{equation}
\sigma^e(K)=\frac{\Psi(\alpha^{*,e}(K))}{Q^e(K)}\cdot F\!\big(c^{*,e}(K)\big).
\label{eq:sigma_congested}
\end{equation}
Resolution falls under AI ($\sigma^{\mathrm{AI}}(K)<\sigma^{\mathrm{HO}}(K)$) whenever pool thinning dominates the combined effect of the composition shift and posted-flow reduction:
\begin{equation}
\underbrace{\frac{\Psi(\alpha^{*,\mathrm{AI}}(K))}{\Psi(\alpha^{*,\mathrm{HO}}(K))}}_{\text{pool ratio}}\cdot\underbrace{\frac{F\!\big(c^{*,\mathrm{AI}}(K)\big)}{F\!\big(c^{*,\mathrm{HO}}(K)\big)}}_{\text{composition}}<\underbrace{\frac{Q^{\mathrm{AI}}(K)}{Q^{\mathrm{HO}}(K)}}_{\text{congestion relief}}.
\label{eq:resolution_race}
\end{equation}
\end{proposition}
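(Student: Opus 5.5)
The argument is a direct computation from the period-equilibrium conditions, so I would carry it out in three short steps.

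First, I establish the closed form for $\sigma^e(K)$. By (E1), $\mu^e(K)=\min\{1,\Psi(\alpha^{*,e}(K))/Q^e(K)\}$. The congestion hypothesis $\Psi(\alpha^{*,e}(K))<Q^e(K)$ gives $Q^e(K)>\Psi(\alpha^{*,e}(K))\ge 0$. So $Q^e>0$, the degenerate convention $\mu^e=0$ is not in play, and the minimum is attained at the ratio. Substituting into (E4), $\sigma^e=\mu^e F(c^{*,e})$, gives \eqref{eq:sigma_congested} for each $e\in\{\mathrm{HO},\mathrm{AI}\}$. Since $T=1$, there is no multi-period survival factor, and $\sigma^e$ is exactly this per-period hazard.

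Second, I compare the two environments by forming the ratio
\[
\frac{\sigma^{\mathrm{AI}}(K)}{\sigma^{\mathrm{HO}}(K)}=\frac{\Psi(\alpha^{*,\mathrm{AI}}(K))}{\Psi(\alpha^{*,\mathrm{HO}}(K))}\cdot\frac{F(c^{*,\mathrm{AI}}(K))}{F(c^{*,\mathrm{HO}}(K))}\cdot\frac{Q^{\mathrm{HO}}(K)}{Q^{\mathrm{AI}}(K)}.
\]
Since every $Q^e$ is strictly positive, multiplying \eqref{eq:resolution_race} through by $Q^{\mathrm{HO}}/Q^{\mathrm{AI}}>0$ shows that the condition is equivalent to this ratio being strictly below one. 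That is exactly $\sigma^{\mathrm{AI}}<\sigma^{\mathrm{HO}}$.

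Third, I dispose of the degenerate cases in which the denominators vanish, which is the only real obstacle.
\begin{itemize}
\item If $\Psi(\alpha^{*,\mathrm{HO}})=0$, then $\sigma^{\mathrm{HO}}=0$ and the pool ratio is undefined. The condition should then be read in cross-multiplied form,
\[
\Psi^{\mathrm{AI}}F(c^{*,\mathrm{AI}})\,Q^{\mathrm{HO}}<\Psi^{\mathrm{HO}}F(c^{*,\mathrm{HO}})\,Q^{\mathrm{AI}}.
\]
This form is equivalent to $\sigma^{\mathrm{AI}}<\sigma^{\mathrm{HO}}$ directly, because both $Q$'s are positive.
\item If $F(c^{*,\mathrm{HO}})=0$, which happens when $c^{*,\mathrm{HO}}=0$ and $F$ has no atom at $0$, the same cross-multiplied reading applies.
\end{itemize}
I would state the result in the ratio form under the maintained nondegeneracy $\Psi(\alpha^{*,\mathrm{HO}})>0$ and $F(c^{*,\mathrm{HO}})>0$, and note that the cross-multiplied form covers the edge cases.

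Finally, I would interpret the three factors as the race described in the text. The pool ratio is at most one, because AI lowers $\alpha^*$ and $\Psi$ is increasing; this factor works against resolution. By Lemma~\ref{lem:composition}, the composition factor is at least one whenever the posted pool is weakly more $H$-rich, since $F$ is nondecreasing. A reduction in posted flow, $Q^{\mathrm{AI}}<Q^{\mathrm{HO}}$, tightens the right-hand side and therefore favors resolution. This sign discussion is commentary and is not needed for the equivalence itself.
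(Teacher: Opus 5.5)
Your proposal is correct and follows the paper's proof. Under congestion $\mu^e=\Psi(\alpha^{*,e})/Q^e$, so $\sigma^e$ takes the stated closed form, and the ratio $\sigma^{\mathrm{AI}}/\sigma^{\mathrm{HO}}$ factors into the pool, composition and congestion-relief terms, whose product is below one exactly when \eqref{eq:resolution_race} holds. Your use of the cross-multiplied form for the degenerate cases $\Psi(\alpha^{*,\mathrm{HO}})=0$ or $F(c^{*,\mathrm{HO}})=0$ is a small clean-up that the paper leaves implicit.
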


Three forces shape the resolution comparison, two of which raise $\sigma$ and one of which lowers it. First, the \emph{composition} channel (Lemma~\ref{lem:composition}): AI selectively diverts routine queries, enriching the posted pool, which raises $\bar\Delta^e$ and hence $c^{*}$ and $F(c^{*})$. Second, \emph{posted-flow reduction}: AI lowers $Q^e$ by resolving more queries privately, and in the congested regime fewer posted queries competing for the same pool raises the match probability $\mu^e=\Psi/Q$. Third, \emph{pool thinning}: AI raises outside options ($w^{\mathrm{AI}}>w^{\mathrm{HO}}$), which lowers $\alpha^{*,e}$ and reduces the contributor pool $\Psi(\alpha^{*,e})$, which in the congested regime lowers $\mu^e$. The condition in \eqref{eq:resolution_race} simply factors the ratio $\sigma^{\mathrm{AI}}/\sigma^{\mathrm{HO}}$ into these three components: 
the pool ratio (less than one), the $F(c^{*})$ ratio (weakly greater than one under Lemma~\ref{lem:composition}), 
and the inverse posted-flow term $Q^{\mathrm{HO}}/Q^{\mathrm{AI}}$ (greater than one, since lower posted flow under AI eases congestion and raises the match probability). 
Resolution falls when the pool decline is large enough relative to the other two adjustments. The qualitative race among these three forces extends to any matching technology with $M_\Psi>0$ and to general $T\ge 1$: pool thinning lowers $\mu$, composition enrichment raises $F(c^{*})$, and the sign of $\sigma^{\mathrm{AI}}-\sigma^{\mathrm{HO}}$ remains determined by which force dominates. The specific factored form in \eqref{eq:resolution_race} uses the Leontief structure.

\begin{remark}[Discriminating prediction: supply vs.\ demand shocks]
\label{rem:discriminating}
The factored form in \eqref{eq:sigma_congested} implies a discriminating prediction for the source of resolution decline. In the congested regime, a pure demand reduction ($Q^e$ falls while $\Psi(\alpha^{*,e})$ is unchanged) raises the match probability $\Psi/Q$ and therefore pushes $\sigma$ upward: fewer queries competing for the same contributor pool means each posted query is more likely to encounter an answerer. A pure supply reduction ($\Psi(\alpha^{*,e})$ falls while $Q^e$ is unchanged) lowers $\Psi/Q$ and pushes $\sigma$ downward. 
This asymmetry holds under any matching technology with $M_\Psi>0$ and under general $T\ge 1$, since $\sigma=1-(1-\mu\cdot F(c^{*}))^T$ is increasing in $\mu$ for any $T$. 
The asymmetry means that observing resolution decline alongside posted-flow decline requires that pool thinning is quantitatively present and dominant: improved private resolution alone (a demand-side shock that diverts questions from the platform) would ease congestion and raise $\sigma$, not lower it. 
This is consistent with the micro-level evidence of \textcite{li2025} that high-activity answerers on Stack Overflow disproportionately reduce contributions after ChatGPT's release, supporting the supply-side pool-thinning channel in Proposition~\ref{prop:resolution}. Section~\ref{sec:discussion} discusses quality-composition mechanisms that go beyond the baseline model's headcount and congestion channel.
\end{remark}

The empirical counterpart is the ``difficulty up, volume down, resolution down'' pattern: AI diverts routine questions from the posted pool (Lemma~\ref{lem:composition}), posted volume falls, remaining questions become harder, yet resolution declines because the contributor pool has thinned (force~(iii) dominates). The volume component of this pattern is documented by \textcite{delrio-chanona2024} for Stack Overflow, and the micro-level supply-side channel is supported by the answerer-withdrawal evidence of \textcite{li2025}. Whether conditional resolution declined in tandem, confirming the full diagnostic pattern of Remark~\ref{rem:discriminating}, remains an open empirical question.

The comparison with \textcite{gans2024} clarifies what the present architecture adds. He identifies the core economic intuition: AI strengthens the private outside option, truncating the posted pool and changing incentive structures on the platform. In his static model, truncation yields a clean sign result: the remaining posted queries are harder, so contributors are more willing to answer. Proposition~\ref{prop:resolution} shows that when contributors are heterogeneous and participation is endogenous, the same truncation that enriches the posted pool (Lemma~\ref{lem:composition}) coexists with pool thinning that reduces matching capacity, and the sign of $\sigma^{\mathrm{AI}}-\sigma^{\mathrm{HO}}$ becomes the outcome of a race rather than a clean comparative static. The posted-flow channel is the same force he identified; what the present model adds is the contributor-side response (pool thinning through improved outside options, which his static setting does not feature) and, in the next section, the dynamic interaction between archive quality and participation that generates multiple steady states.

\subsection{Dynamic crowd-out}
\label{sec:crowdout}
We now use the two margins jointly to characterize long-run archive outcomes. At a human-only steady state $K^{\mathrm{HO}}$, creation exactly offsets depreciation: $h^{\mathrm{HO}}(K^{\mathrm{HO}})=\lambda K^{\mathrm{HO}}$. If AI reduces creation at that point, the archive cannot be sustained.

\begin{proposition}[Dynamic crowd-out and lower archive steady states]
\label{prop:dynamic_crowdout}
Let $K^{\mathrm{HO}}>0$ be a steady state in the human-only economy: $h^{\mathrm{HO}}(K^{\mathrm{HO}})=\lambda K^{\mathrm{HO}}$. Suppose at $K^{\mathrm{HO}}$ the AI economy satisfies
\begin{equation}
h^{\mathrm{AI}}(K^{\mathrm{HO}};0)<h^{\mathrm{HO}}(K^{\mathrm{HO}}),
\label{eq:h_decline}
\end{equation}
i.e., AI reduces one-step knowledge creation at the human-only steady state (under $\eta=0$). If $h^{\mathrm{AI}}(\cdot\,;0)$ is continuous, the AI economy admits a steady state $K^{\mathrm{AI}}\in[0,K^{\mathrm{HO}})$.
\end{proposition}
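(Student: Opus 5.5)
The plan is an intermediate value argument applied to the net creation function $D(K)\equiv h^{\mathrm{AI}}(K;0)-\lambda K$ on the interval $[0,K^{\mathrm{HO}}]$. Continuity of $h^{\mathrm{AI}}(\cdot\,;0)$ is assumed in the statement, so $D$ is continuous. It remains to sign $D$ at the two endpoints.

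\textbf{Right endpoint.} Combining \eqref{eq:h_decline} with the steady-state condition $h^{\mathrm{HO}}(K^{\mathrm{HO}})=\lambda K^{\mathrm{HO}}$ gives
\[
D(K^{\mathrm{HO}})=h^{\mathrm{AI}}(K^{\mathrm{HO}};0)-h^{\mathrm{HO}}(K^{\mathrm{HO}})<0 .
\]

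\textbf{Left endpoint.} Here I would show $D(0)=h^{\mathrm{AI}}(0;0)\ge 0$. Nonnegativity follows from \eqref{eq:h_ai}, since $\Delta>0$, $\pi<1$, $q_H^{\mathrm{AI}}\ge 0$, and $\sigma^{\mathrm{AI}}\in[0,1]$; the convention $\sigma=0$ when $Q=0$ also gives $h\ge 0$. There are two cases.
\begin{enumerate}[(i)]
\item If $D(0)=0$, then $K^{\mathrm{AI}}=0$ is itself a steady state in $[0,K^{\mathrm{HO}})$.
\item If $D(0)>0$, the intermediate value theorem yields some $K^{\mathrm{AI}}\in(0,K^{\mathrm{HO}})$ with $D(K^{\mathrm{AI}})=0$, that is, $h^{\mathrm{AI}}(K^{\mathrm{AI}};0)=\lambda K^{\mathrm{AI}}$.
\end{enumerate}
In both cases the zero lies strictly below $K^{\mathrm{HO}}$, because $D(K^{\mathrm{HO}})<0$ rules out $K^{\mathrm{HO}}$ itself.

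\textbf{Main obstacle.} The delicate point is not the intermediate value theorem but making sure $h^{\mathrm{AI}}(\cdot\,;0)$ is a well-defined function of $K$ on all of $[0,K^{\mathrm{HO}}]$, including near $K=0$. It is built from the selected inner-loop branch and the participation cutoff, and the statement's continuity hypothesis implicitly presumes both. I would invoke the maintained assumption from Section~\ref{sec:equilibrium} that the selected branch varies continuously in $K$, and note that near $Q^{\mathrm{AI}}=0$ the conventions $\mu=\sigma=0$ keep $h\ge 0$. I would not need any sign information about the two margins or Proposition~\ref{prop:resolution}; those results only serve to motivate when \eqref{eq:h_decline} holds.

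I would also remark that the result does not assert stability or uniqueness. If one wants the largest AI steady state below $K^{\mathrm{HO}}$, take $K^{\mathrm{AI}}=\sup\{K\in[0,K^{\mathrm{HO}}]:D(K)\ge 0\}$. This set is closed and nonempty since $0$ belongs to it, and its supremum is strictly below $K^{\mathrm{HO}}$ by continuity and $D(K^{\mathrm{HO}})<0$, which gives a canonical choice.
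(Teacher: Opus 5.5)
Your proposal is correct and matches the paper's proof essentially step for step. You sign $h^{\mathrm{AI}}(K;0)-\lambda K$ as negative at $K^{\mathrm{HO}}$ and nonnegative at $0$, take $K^{\mathrm{AI}}=0$ when $h^{\mathrm{AI}}(0;0)=0$, and otherwise apply the intermediate value theorem on $[0,K^{\mathrm{HO}}]$; your supremum construction of the largest zero below $K^{\mathrm{HO}}$ is a valid extra refinement that the paper does not state.
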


Proposition~\ref{prop:dynamic_crowdout} is an existence result: when AI reduces knowledge creation at the human-only benchmark, the intermediate value theorem delivers a lower AI steady state. The proof is in Appendix~\ref{app:proofs}. 
The continuity of $h^{\mathrm{AI}}(\cdot\,;0)$ follows from the maintained assumption in Section~\ref{sec:equilibrium} that the selected branch of inner-loop equilibria varies continuously with $K$.\footnote{If the inner loop admits multiple fixed points for some values of $K$, the Pareto-dominant selection may introduce discontinuities in $h^{\mathrm{AI}}$ at points where the set of inner-loop equilibria changes. 
The maintained assumption in Section~\ref{sec:equilibrium} rules out such transitions within the relevant range $[0,K^{\mathrm{HO}}]$.} 
Note Proposition~\ref{prop:dynamic_crowdout} does not assume that resolution conditional on posting improves under AI; both margins can decline simultaneously, which matches the empirical pattern more closely. 
The same IVT logic extends to general $T$, conditional on continuity of the selected branch and the benchmark inequality $h^{\mathrm{AI}}(K^{\mathrm{HO}};0)<h^{\mathrm{HO}}(K^{\mathrm{HO}})$.
The result applies to any human-only steady state, stable or unstable. When the human-only economy itself admits multiple steady states (as Section~\ref{sec:shrinkage} establishes), Proposition~\ref{prop:dynamic_crowdout} can be applied to each one separately: if AI reduces creation at a given $K^{\mathrm{HO}}$, the AI economy has a steady state below it.

The following corollary decomposes the condition in Proposition~\ref{prop:dynamic_crowdout} into the two margins and connects them to the channels identified in Proposition~\ref{prop:resolution}.

\begin{corollary}[Two-margin decomposition of crowd-out]
\label{cor:decomp}
The condition $h^{\mathrm{AI}}(K^{\mathrm{HO}};0)<h^{\mathrm{HO}}(K^{\mathrm{HO}})$ is equivalent to
\begin{equation}
q_H^{\mathrm{HO}}(K^{\mathrm{HO}})\,\sigma^{\mathrm{HO}}(K^{\mathrm{HO}})>q_H^{\mathrm{AI}}(K^{\mathrm{HO}})\,\sigma^{\mathrm{AI}}(K^{\mathrm{HO}}).
\label{eq:corollary_cond}
\end{equation}
The decline decomposes as
\begin{equation}
q_H^{\mathrm{HO}}\sigma^{\mathrm{HO}}-q_H^{\mathrm{AI}}\sigma^{\mathrm{AI}}=\underbrace{\bar\sigma\,\Delta q_H}_{\emph{flow margin}}+\underbrace{\bar q_H\,\Delta\sigma}_{\emph{resolution margin}},
\label{eq:two_margin_decomp}
\end{equation}
where $\Delta q_H\equiv q_H^{\mathrm{HO}}-q_H^{\mathrm{AI}}$, $\Delta\sigma\equiv \sigma^{\mathrm{HO}}-\sigma^{\mathrm{AI}}$, and bars denote averages across regimes.\footnote{The identity follows from $\bar\sigma\,\Delta q_H+\bar q_H\,\Delta\sigma=\tfrac{1}{2}(q_H^{\mathrm{HO}}\sigma^{\mathrm{HO}}-q_H^{\mathrm{AI}}\sigma^{\mathrm{HO}}+q_H^{\mathrm{HO}}\sigma^{\mathrm{AI}}-q_H^{\mathrm{AI}}\sigma^{\mathrm{AI}})+\tfrac{1}{2}(q_H^{\mathrm{HO}}\sigma^{\mathrm{HO}}-q_H^{\mathrm{HO}}\sigma^{\mathrm{AI}}+q_H^{\mathrm{AI}}\sigma^{\mathrm{HO}}-q_H^{\mathrm{AI}}\sigma^{\mathrm{AI}})=q_H^{\mathrm{HO}}\sigma^{\mathrm{HO}}-q_H^{\mathrm{AI}}\sigma^{\mathrm{AI}}$. Midpoint weighting absorbs the interaction term that appears under base-period weighting.} The flow margin captures the change in posted $H$-type volume; the resolution margin captures the change in the probability that posted queries are resolved.
The condition in Proposition~\ref{prop:dynamic_crowdout} holds whenever the sum of the two margins is positive.
\end{corollary}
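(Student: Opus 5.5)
The plan is to verify the Corollary~\ref{cor:decomp} statement in three short steps, all of them algebraic once the definitions in \eqref{eq:h_ho} and \eqref{eq:h_ai} are in place.

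\textbf{Equivalence.} Evaluate both creation functions at $K=K^{\mathrm{HO}}$ with $\eta=0$. By \eqref{eq:h_ai}, $h^{\mathrm{AI}}(K^{\mathrm{HO}};0)=\Delta(1-\pi)\,q_H^{\mathrm{AI}}\sigma^{\mathrm{AI}}$, because the conversion term $\eta\,a_H^{\mathrm{AI}}$ vanishes. By \eqref{eq:h_ho}, $h^{\mathrm{HO}}(K^{\mathrm{HO}})=\Delta(1-\pi)\,q_H^{\mathrm{HO}}\sigma^{\mathrm{HO}}$. The common factor $\Delta(1-\pi)$ is strictly positive, since $\Delta>0$ and $\pi\in(0,1)$. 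Dividing it out preserves the strict inequality, so \eqref{eq:h_decline} is equivalent to \eqref{eq:corollary_cond}. I will note that both directions follow from this single division.

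\textbf{Midpoint identity.} Define $\bar\sigma=\tfrac12(\sigma^{\mathrm{HO}}+\sigma^{\mathrm{AI}})$ and $\bar q_H=\tfrac12(q_H^{\mathrm{HO}}+q_H^{\mathrm{AI}})$. Expand $\bar\sigma\,\Delta q_H+\bar q_H\,\Delta\sigma$ into eight terms, as in the footnote. The cross terms $q_H^{\mathrm{AI}}\sigma^{\mathrm{HO}}$ and $q_H^{\mathrm{HO}}\sigma^{\mathrm{AI}}$ each appear once with a plus sign and once with a minus sign, so they cancel. What remains is $\tfrac12\cdot 2(q_H^{\mathrm{HO}}\sigma^{\mathrm{HO}}-q_H^{\mathrm{AI}}\sigma^{\mathrm{AI}})$, which is \eqref{eq:two_margin_decomp}. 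This is an exact identity, with no residual interaction term. That is the reason for choosing midpoint rather than base-period weights.

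\textbf{Final claim.} If the sum of the two margins is positive, the identity shows that the left side of \eqref{eq:two_margin_decomp} is positive. That is \eqref{eq:corollary_cond}, and by the first step it is the hypothesis of Proposition~\ref{prop:dynamic_crowdout}.

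I would also add a short remark on the two terms. In the congested regime, Proposition~\ref{prop:resolution} signs the resolution margin $\bar q_H\,\Delta\sigma$ through the race condition \eqref{eq:resolution_race}. The flow margin is positive whenever AI lowers $q_H$, for example through private diversion $a_H^{\mathrm{AI}}\ge a_H^{\mathrm{HO}}$ with escalation not rising enough to offset it. The margins need not share a sign; only their sum matters.

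\textbf{Main obstacle.} There is no real mathematical obstacle. The only delicate point is interpretation: $q_H$ and $\sigma$ are equilibrium objects, so the decomposition is an accounting identity and not a causal split. I would say this explicitly so the corollary is not read as claiming that each margin isolates a single mechanism. I would also record that all quantities are evaluated on the selected inner-loop branch at $K^{\mathrm{HO}}$, so each one is well defined.
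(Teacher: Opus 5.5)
Your proposal is correct and follows the paper's proof essentially line for line: you divide out $\Delta(1-\pi)>0$ from both creation functions at $\eta=0$ to obtain the equivalence, then expand the midpoint-weighted terms so that the cross products $q_H^{\mathrm{AI}}\sigma^{\mathrm{HO}}$ and $q_H^{\mathrm{HO}}\sigma^{\mathrm{AI}}$ cancel. One small correction to your side remark, which is not part of the proof: a fall in $q_H$ makes the flow margin strictly positive only when $\bar\sigma>0$.
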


The decomposition is an accounting identity, not a causal decomposition: it describes \emph{how} archive creation declines but does not separately identify why each margin moved. A decline in the flow component ($\Delta q_H > 0$) is most naturally associated with stronger private diversion, but because posting is endogenous to $\sigma^e$, it can also reflect equilibrium feedback through resolution and congestion. The empirical counterpart is the decline in question volume documented by \textcite{delrio-chanona2024} and the traffic diversion documented by \textcite{padilla2025}. A decline in the resolution component ($\Delta\sigma > 0$) reflects the net outcome of the race in Proposition~\ref{prop:resolution}: pool thinning, composition enrichment, and congestion relief jointly determine whether posted queries are more or less likely to be resolved. Importantly, the two components respond to different instruments: public logging ($\eta>0$, studied in Section~\ref{sec:logging}) offsets the decline associated with private diversion by converting off-platform resolutions into public artifacts, but it does not repair congestion- or participation-driven deterioration in conditional resolution, which requires maintaining contributor engagement on the platform.

The decomposition in Corollary~\ref{cor:decomp} and the discriminating prediction in Remark~\ref{rem:discriminating} are empirically operational: they require separate measurement of posted flow $Q$ and conditional resolution $s$ on a platform exposed to an AI capability shock, with a less-exposed platform as a comparison. Existing evidence clearly documents the flow-side decline on highly exposed platforms and provides suggestive evidence consistent with a supply-side channel, but it does not yet establish the full $Q$-versus-$s$ diagnostic pattern implied by Remark~\ref{rem:discriminating}. \textcite{delrio-chanona2024} document a significant decline in Stack Overflow activity relative to less-exposed Q\&A platforms after ChatGPT's release, consistent with a contraction in the flow margin. \textcite{li2025} find that high-activity answerers disproportionately reduced contributions, which is suggestive of contributor-pool thinning. The theory sharpens the empirical agenda: if future work confirms that both posted volume and conditional resolution declined on exposed platforms while remaining stable on less-exposed platforms, private diversion alone cannot explain the pattern in the congested regime, and supply-side pool thinning must be quantitatively present (Remark~\ref{rem:discriminating}). A formal decomposition requires three ingredients: (i)~separate $Q_{p,t}$ and $s_{p,t}$ time series with credible causal identification and standard errors, (ii)~answerer-level activity data to separate contributor-pool thinning from posted-pool composition shifts, and (iii)~a credible pre-AI proxy for platform-specific $\pi$-vulnerability.

\section{Dynamic Crowd-Out and the Self-Undermining Feedback}
\label{sec:shrinkage}
Proposition~\ref{prop:dynamic_crowdout} established that AI can lower the long-run archive stock when it reduces knowledge creation at the human-only steady state. This section studies a further amplification mechanism: the interaction between archive quality and contributor participation through outside options. The mechanism rests on a property of the outside-option function that was maintained but not yet exploited: $w_K^e>0$, i.e., a richer archive raises outside options by improving retrieval-augmented and AI-assisted tools that complement agents' own work. This creates a self-undermining feedback loop: as the archive grows, outside options improve, contributors exit the platform, and the resulting pool thinning slows archive growth. When this feedback is strong enough, the average creation rate is hump-shaped in $K$, generating multiple steady states and a structural minimum viable archive below which the platform cannot sustain itself. The structural threshold is not AI-specific: both the pre-AI and AI economies possess it, because the empty-archive shutdown (Assumption~\ref{ass:phi_shape}(a) below) is environment-independent. What AI does is compress the stable steady state inward, narrowing the viable region through the same two margins identified in the previous section.

\subsection{The self-undermining feedback}
\label{sec:feedback}
The feedback operates through the participation margin. At the participation cutoff, $S^e(K)=w^e(\alpha^{*,e}(K),K;\pi)$, where $S^e(K)$ is the expected per-period payoff from volunteering and $w^e$ is the outside option. When $K$ increases, two competing effects arise on the contributor side. The first is an answering-cost reduction: $C'(K)<0$ means a richer archive makes answering cheaper, which raises the answering cutoff $c^{*,e}(K)$, which raises both $\sigma^e(K)$ (making posting more attractive) and $\Pi^e(K)$ (making volunteering more rewarding). The second is outside-option improvement: $w_K^e>0$ means a richer archive also raises the return to own work, pulling agents away from the platform. The first effect tends to sustain platform activity; the second tends to undermine it. Which effect dominates depends on $K$.

A partial stabilizer is available in the uncongested regime ($\Psi(\alpha^{*,e})>Q^e$), where not all participants are matched: as $\Psi$ falls toward $Q$, each remaining participant is matched more often ($\min\{1,\,Q/\Psi\}$ rises), which raises $S^e(K)$ and partially offsets exit pressure. Once the platform tips into congestion ($\Psi<Q$), however, this stabilizer is exhausted: every participant is already matched with probability one, so further thinning of the pool has no offsetting effect on participant match rates. The asymmetry creates a natural non-linearity: the platform is resilient to moderate exit while uncongested but becomes fragile once congested.

\subsection{Multiple steady states}
\label{sec:steady_states}
The following result shows that the self-undermining feedback, combined with low-$K$ scarcity, can generate multiple steady states in the AI economy. Define average creation $\phi^{\mathrm{AI}}(K)\equiv h^{\mathrm{AI}}(K;0)/K$ for $K>0$. We state conditions on the model's primitives that deliver the required shape through the participation and congestion mechanisms.

\begin{assumption}[Primitive conditions for multiple steady states]
\label{ass:phi_shape}
Maintain $\eta=0$.
\begin{enumerate}[(a)]
\item \textbf{Empty-archive shutdown:} $C(0)>\beta\Delta+u$. When the archive is empty, the answering cost shifter is so high that no agent is willing to answer even the most valuable query: $\beta\bar\Delta^e(0)+u-C(0)\le\beta\Delta+u-C(0)<0$, so $c^{*,e}(0)=0$.
\item \textbf{Platform viability at intermediate $K$:} There exists $\hat K>0$ at which the model's primitives jointly support a functioning platform with $\phi^{\mathrm{AI}}(\hat K)>\lambda$. That is, at $\hat K$ the answering cost $C(\hat K)$ is low enough that $c^{*}>0$, the contributor pool $\Psi(\alpha^{*})$ and posted flow $Q$ are both positive, the resulting resolution probability $\sigma$ is high enough to sustain posting, and the product $h^{\mathrm{AI}}(\hat K;0)$ exceeds $\lambda\hat K$. Unlike~(a) and~(c), this condition cannot be expressed as a single inequality on primitive parameters; it requires that the functional forms jointly deliver a viable platform at some $\hat K$, which is verified in the parametric example (Appendix~\ref{app:parametric}).
\item \textbf{Unbounded outside options:} $\lim_{K\to\infty}w^{\mathrm{AI}}(\underline\alpha,K;\pi)=\infty$. Even the lowest-ability agent eventually exits the platform as $K$ grows, because the return to own work grows without bound.
\end{enumerate}
Additionally, under the maintained selection assumption in
Section~\ref{sec:equilibrium}, $\phi^{\mathrm{AI}}$ is continuous on $(0,\infty)$.
\end{assumption}

Condition~(a) implies low-$K$ scarcity: when $C(0)>\beta\Delta+u$, continuity of $C$ ensures $c^{*,e}(K)=0$ in a neighborhood of $K=0$, so $F(c^{*})=0$, $\sigma^e=0$, the posting benefit $U_\theta^e=0$, and therefore $m_\theta^e=\Gamma_\theta(0)=0$ (since $\Gamma_\theta$ admits a density). All activity collapses: $Q^e=0$, $h^{\mathrm{AI}}=0$, and $\phi^{\mathrm{AI}}(K)\to 0<\lambda$ as $K\downarrow 0$. Condition~(c) implies high-$K$ drain: when $w^{\mathrm{AI}}(\underline\alpha,K;\pi)\to\infty$, the participation surplus $S^e(K)$ (which is bounded by $\Pi^e(K)\le\bar c$) eventually falls below $w^{\mathrm{AI}}(\underline\alpha,K;\pi)$ for all $\underline\alpha$, so $\alpha^{*,e}(K)<\underline\alpha$ and $\Psi(\alpha^{*,e})=0$ for all sufficiently large $K$. With no contributors, $\mu^e=0$, $\sigma^e=0$, $h^{\mathrm{AI}}=0$, and $\phi^{\mathrm{AI}}(K)\to 0<\lambda$. Condition~(b) provides the existence requirement. Together, the three conditions imply that $\phi^{\mathrm{AI}}$ is below $\lambda$ near zero, above $\lambda$ at some interior $\hat K$, and below $\lambda$ for large $K$, which is the shape that generates multiple steady states.

The distinction from the aggregate-effort framework of \textcite{acemoglu2026} is worth noting. In their model, knowledge collapse requires a condition on effort cost curvature (high effort elasticity) combined with sufficiently accurate agentic AI. Here, the hump in $\phi^{\mathrm{AI}}$ arises from the interaction between archive quality and contributor exit through the congestion matching technology: (a) reflects that answering requires archive support, (c) reflects that archive quality eventually pulls contributors away, and (b) reflects that productive matching is possible in between.

\begin{proposition}[Self-undermining feedback and multiple steady states]
\label{prop:shrinkage}
Under Assumption~\ref{ass:phi_shape}, $\phi^{\mathrm{AI}}(\cdot)$ crosses $\lambda$ at least twice, and the AI economy has at least two positive steady states $0<K_U^{\mathrm{AI}}<K_H^{\mathrm{AI}}$, where $K_U^{\mathrm{AI}}$ is an unstable threshold and $K_H^{\mathrm{AI}}$ is locally stable. If $K_t<K_U^{\mathrm{AI}}$, the archive declines toward zero or a lower steady state.
\end{proposition}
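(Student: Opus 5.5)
The plan is to study the sign of $D(K)\equiv\phi^{\mathrm{AI}}(K)-\lambda$ on $(0,\infty)$. By Assumption~\ref{ass:phi_shape}, $D$ is continuous there. Since $h^{\mathrm{AI}}(K;0)-\lambda K=K\,D(K)$ and $K>0$, positive steady states are exactly the zeros of $D$, and the sign of $K_{t+1}-K_t=h^{\mathrm{AI}}(K_t;0)-\lambda K_t$ equals the sign of $D(K_t)$. I would therefore first fix three reference points.

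\textbf{Reference points.} By condition~(a) and continuity of $C$, there is $\varepsilon>0$ with $c^{*,\mathrm{AI}}(K)=0$ on $(0,\varepsilon)$. The chain in the discussion after Assumption~\ref{ass:phi_shape} then applies: $F(0)=0$, so $\sigma=0$ and $U_\theta=0$; then $m_\theta=\Gamma_\theta(0)=0$, so $Q=0$ and $h^{\mathrm{AI}}=0$. Hence $D<0$ on $(0,\varepsilon)$, and I pick $K_1\in(0,\min\{\varepsilon,\hat K\})$. By condition~(b), $D(\hat K)>0$. For condition~(c), I bound $S^{\mathrm{AI}}(K)\le\Pi^{\mathrm{AI}}(K)\le\bar c$ uniformly. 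Choosing $K_2>\hat K$ with $w^{\mathrm{AI}}(\underline\alpha,K;\pi)>\bar c$ for all $K\ge K_2$ gives $\alpha^{*}<\underline\alpha$ and $\Psi=0$, so $\mu=\sigma=0$ and $D(K_2)=-\lambda<0$. One technical step needs care here: (c) gives divergence only along a sequence, but $\liminf=\infty$ is all that is used, and $w^{\mathrm{AI}}$ is increasing in $K$ anyway.

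\textbf{Crossings and their location.} The intermediate value theorem on $[K_1,\hat K]$ and on $[\hat K,K_2]$ yields zeros of $D$ on each side of $\hat K$, so $\phi^{\mathrm{AI}}$ crosses $\lambda$ at least twice. To get well-defined crossings with the right local signs, I would define
\[
K_U^{\mathrm{AI}}\equiv\sup\{K\in(0,\hat K]:D(K)\le 0\},\qquad K_H^{\mathrm{AI}}\equiv\inf\{K\ge\hat K:D(K)\le 0\}.
\]
By continuity, $D(K_U^{\mathrm{AI}})=D(K_H^{\mathrm{AI}})=0$, and $K_1\le K_U^{\mathrm{AI}}<\hat K<K_H^{\mathrm{AI}}\le K_2$, so both are positive. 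By construction, $D>0$ on $(K_U^{\mathrm{AI}},K_H^{\mathrm{AI}})$.

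\textbf{Stability and the main obstacle.} Stability is the step I expect to be hardest. Sign information alone gives the following. On $(K_U^{\mathrm{AI}},K_H^{\mathrm{AI}})$ the archive strictly increases. Because the dynamics are discrete, I also need no overshoot: for $K<K_H^{\mathrm{AI}}$ near $K_H^{\mathrm{AI}}$, I need $K_{t+1}=(1-\lambda)K_t+h^{\mathrm{AI}}(K_t;0)$ to remain near $K_H^{\mathrm{AI}}$, which follows from continuity of $h^{\mathrm{AI}}$ and $h^{\mathrm{AI}}(K_H^{\mathrm{AI}};0)=\lambda K_H^{\mathrm{AI}}$. The remaining gap is the right side of $K_H^{\mathrm{AI}}$, where $D\le 0$ is not guaranteed to be strict. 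My first approach is to adopt the generic (transversal) crossing convention, under which $D$ changes sign strictly at $K_H^{\mathrm{AI}}$ from $+$ to $-$. Then the map is locally contracting toward $K_H^{\mathrm{AI}}$ from both sides provided $|1-\lambda+h^{\mathrm{AI}}_K(K_H^{\mathrm{AI}};0)|<1$, a slope condition I would state explicitly or check in Appendix~\ref{app:parametric}. Symmetrically, $D>0$ just above $K_U^{\mathrm{AI}}$ and $D<0$ just below it under transversality, so nearby paths move away and $K_U^{\mathrm{AI}}$ is unstable.

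\textbf{Final claim.} For $K_t<K_U^{\mathrm{AI}}$, I argue that $D(K_t)<0$, so $K_t$ is decreasing. The sequence is also bounded below by $0$, so it converges, and by continuity its limit is a fixed point of $K\mapsto(1-\lambda)K+h^{\mathrm{AI}}(K;0)$ lying in $[0,K_U^{\mathrm{AI}})$: either $0$ or a lower steady state. There is one more subtlety. If other zeros of $D$ lie below $K_U^{\mathrm{AI}}$, the sign of $D(K_t)$ is only guaranteed on the interval between consecutive zeros. The path then decreases monotonically to the largest zero below $K_t$, which is precisely the ``lower steady state'' alternative in the statement.
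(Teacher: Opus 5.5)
Your reference points, IVT crossings and extremal-zero construction follow the paper closely. The last step, however, has a genuine gap, and it comes from which extremal zero you call $K_U^{\mathrm{AI}}$.

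\textbf{The gap.} Your definition $K_U^{\mathrm{AI}}\equiv\sup\{K\in(0,\hat K]:D(K)\le 0\}$ picks the \emph{last} zero of $D$ before $\hat K$. Nothing then forces $D<0$ on all of $(0,K_U^{\mathrm{AI}})$. Suppose $\phi^{\mathrm{AI}}$ crosses $\lambda$ upward at $K_a$, downward at $K_b$, and upward again at $K_c<\hat K$; Assumption~\ref{ass:phi_shape} does not exclude this. Then your $K_U^{\mathrm{AI}}$ is $K_c$, yet for $K_t\in(K_a,K_b)$ you have $D(K_t)>0$ and the archive grows. So the proposition's last sentence is false for your $K_U^{\mathrm{AI}}$.

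Your patch (``the path decreases monotonically to the largest zero below $K_t$'') presupposes $D<0$ on every component of $(0,K_U^{\mathrm{AI}})\setminus\{D=0\}$, which is exactly what fails. Even on a component where $D<0$, monotone descent to its lower endpoint would also require that the discrete map not jump past that endpoint.

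\textbf{The fix.} Use the paper's choice: take $K_U^{\mathrm{AI}}$ to be the \emph{smallest} zero, $\inf\{K\in(0,\hat K]:D(K)=0\}$.
\begin{itemize}
\item Because $D<0$ on $(0,\varepsilon)$ and the zero set is closed, this infimum is attained strictly between $0$ and $\hat K$, and $D<0$ on all of $(0,K_U^{\mathrm{AI}})$.
\item Any $K_t<K_U^{\mathrm{AI}}$ then generates a strictly decreasing positive sequence. Its limit is a fixed point in $[0,K_U^{\mathrm{AI}})$, hence $0$, which is a fixed point since $h^{\mathrm{AI}}(0;0)=0$ under condition~(a).
\item This also gives instability of $K_U^{\mathrm{AI}}$ without any transversality: points arbitrarily close below it converge to $0$.
\item The price is that $D>0$ immediately above $K_U^{\mathrm{AI}}$ is no longer guaranteed; the paper only exhibits some $K'\in(K_U^{\mathrm{AI}},\hat K)$ with $D(K')>0$. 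Instability does not need more.
\end{itemize}

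\textbf{On $K_H^{\mathrm{AI}}$.} The paper again takes the opposite extremal zero to yours, namely the largest zero below $\bar K$. This gives strict $D<0$ to the right, whereas your choice gives strict $D>0$ to the left. Neither choice alone delivers two-sided local stability.

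Your transversality convention and slope condition $|1-\lambda+h_K^{\mathrm{AI}}(K_H^{\mathrm{AI}};0)|<1$ are hypotheses beyond Assumption~\ref{ass:phi_shape}. The paper's sign argument implicitly relies on the same kind of regularity: its ``stable from below'' step exhibits only a single point $K''$ and ignores overshoot. So this part is not a weakness relative to the paper.
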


The proof, provided in Appendix~\ref{app:proofs}, follows from the intermediate value theorem applied to $h^{\mathrm{AI}}(K;0)-\lambda K=K(\phi^{\mathrm{AI}}(K)-\lambda)$: the function is negative near zero (from~(a)), positive at $\hat K$ (from~(b)), and negative for large $K$ (from~(c)), so it crosses zero at least twice. Sign changes determine stability: $K_U^{\mathrm{AI}}$ is unstable (net growth is negative just below and positive just above) and $K_H^{\mathrm{AI}}$ is locally stable (net growth is positive just below and negative just above). The threshold $K_U^{\mathrm{AI}}$ acts as a basin boundary: platforms whose archive stock falls below $K_U^{\mathrm{AI}}$ experience self-reinforcing decline because creation falls short of depreciation, while platforms above $K_U^{\mathrm{AI}}$ converge to the high steady state. The primitive conditions in Assumption~\ref{ass:phi_shape} use the $T=1$ structure for the specific expressions; under $T>1$ the same qualitative conditions hold but the threshold expressions change (Appendix~\ref{app:T_extension}). The two-crossing structure is not AI-specific: the empty-archive shutdown in condition~(a) depends on $C(0)>\beta\Delta+u$, which is environment-independent, and the AI-specific terms ($\gamma_w K$, $\rho K$) vanish as $K\to 0$, so both economies activate at approximately the same archive stock. The human-only economy satisfies the same conditions whenever $w^{\mathrm{HO}}$ grows without bound in $K$ (condition~(c) with $\gamma_w^{\mathrm{HO}}>0$); even when $\gamma_w^{\mathrm{HO}}=0$, as in the parametric example, $\phi^{\mathrm{HO}}$ can exhibit a low-$K$ threshold of similar magnitude. The AI-specific effect is the compression of the stable steady state: $K_H^{\mathrm{AI}}<K_H^{\mathrm{HO}}$ (Proposition~\ref{prop:dynamic_crowdout}), which narrows the viable region.

\begin{figure}[t]
\centering
\includegraphics[width=0.75\textwidth]{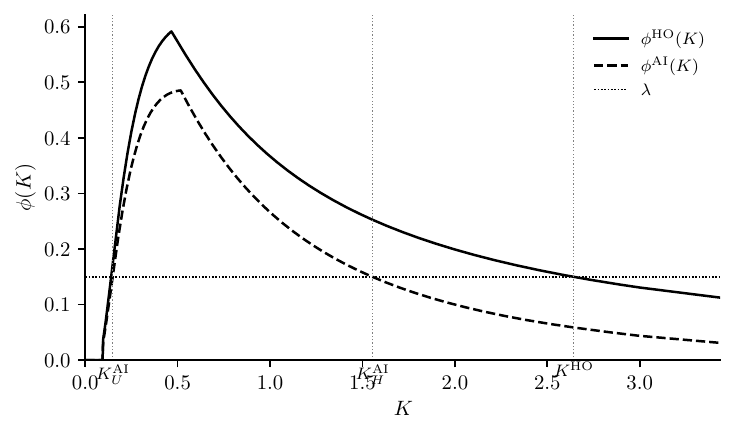}
\caption{Average knowledge creation rate under the human-only economy ($\phi^{\mathrm{HO}}$, solid) and the AI economy ($\phi^{\mathrm{AI}}$, dashed) for the parametric example in Appendix~\ref{app:parametric}, with depreciation rate $\lambda=0.15$ (dotted). AI compresses the stable steady state inward ($K_H^{\mathrm{AI}}\approx 1.55$ vs.\ $K^{\mathrm{HO}}\approx 2.64$) while the structural minimum viable archive is approximately environment-independent ($K_U^{\mathrm{AI}}\approx 0.15\approx K_U^{\mathrm{HO}}$).}\label{fig:parametric}
\end{figure}

The economic logic is as follows. At very low $K$, the archive is too thin to support productive matching: answering costs are high, resolution is low, and the platform cannot generate enough new knowledge to offset depreciation. As $K$ rises into the intermediate range, answering becomes cheaper, resolution improves, and the platform enters a region of net growth. But as $K$ continues to grow, the self-undermining feedback intensifies: outside options improve ($w_K^e>0$), the participation cutoff falls, the contributor pool thins, and eventually the congestion effect overwhelms the answering-cost reduction. The platform's own success in building the archive raises the opportunity cost of contributing to it.\footnote{A natural question is: if the archive declines toward a low-$K$ region, why not reverse the process by reducing reliance on AI? In practice this is not a frictionless margin. AI is an off-platform outside option that individual users and contributors cannot unilaterally remove, and platform-wide reduction in AI use would require coordination or external regulation. Moreover, when $K$ is low, answering costs $C(K)$ are high, resolution is low, and few agents find it worthwhile to post or participate; this state dependence means that even if some agents attempted to rely more on the platform, the thin contributor pool and poor resolution would discourage them, slowing recovery. Endogenizing regime choice is a natural extension, but the baseline isolates the externality in the open-archive channel.}

\subsection{Cross-platform vulnerability: the role of $\pi$}
\label{sec:pi_vulnerability}
The routine-task share $\pi$ enters both margins through which AI reduces archive creation, generating a cross-platform prediction about which types of platforms are most exposed.
On the flow side, high-$\pi$ domains lose more total posted traffic because a larger share of their queries is routine and privately resolved under AI. AI also diverts some $H$-type queries privately ($a_H^{\mathrm{AI}}>a_H^{\mathrm{HO}}$), reducing the public inflow of knowledge-enhancing problems, although the net response of posted $H$-flow is an equilibrium outcome rather than a one-for-one consequence of routine-query diversion alone. 
On the resolution margin, high-$\pi$ domains experience a larger outside-option shift ($w^{\mathrm{AI}}-w^{\mathrm{HO}}$ is increasing in $\pi$ by assumption~(iv) in Section~\ref{sec:participation}), so the participation cutoff falls more, thinning the pool further. The combined effect makes high-$\pi$ platforms more exposed overall: they lose more traffic to private resolution and, because the outside-option shift is larger, they lose more contributors to improved outside options. 
This prediction relies on $\pi$ governing both query composition and the outside-option shift simultaneously, a feature that generates cross-platform heterogeneity not present in aggregate-effort or static selection frameworks.
Platforms like Stack Overflow (high $\pi$: much of programming Q\&A is routine) are predicted to be more exposed than platforms like MathOverflow (low $\pi$: research-level mathematics has a smaller routine share), which is consistent with the differential AI exposure documented by \textcite{delrio-chanona2024} and the heterogeneous effects across knowledge communities documented by \textcite{quinn2025}. A pre-AI proxy for $\pi$ that does not reference AI capability is the platform's answer reuse rate: the fraction of answers that are viewed or linked from other questions at high frequency, indicating that the underlying problems are common and procedural rather than novel. Cross-platform variation in this proxy should predict differential exposure to the AI-induced decline, though the proxy measures a consequence of the routine-task share rather than the share itself, and more direct alternatives (such as expert classification of a question sample) may be preferable where available.

\section{Public Logging and AI-to-Commons Conversion}
\label{sec:logging}
The previous sections established that AI can reduce archive creation through two channels: the flow margin (fewer knowledge-enhancing queries reach the platform) and the resolution margin (contributor pool thinning lowers the probability that posted queries are resolved). A natural governance question is whether platform design can mitigate one or both channels. We study a conversion lever that targets the flow margin directly: public logging that converts private AI-assisted resolutions into public knowledge.

The conversion rate $\eta\in[0,1]$ in \eqref{eq:h_ai} captures this channel: when private resolution succeeds in the AI regime for a type-$H$ query, the solution enters the public archive with probability $\eta$. The reduced form accommodates a range of implementable mechanisms: default sharing of AI-assisted solutions, structured archiving that turns chat transcripts into searchable documentation, ``post your solved query'' prompts, institutional logging requirements, and platform rules that encourage transforming private assistance into verifiable public artifacts. In practice, $\eta$ is the product of at least two components: detection (does the system recognize the resolution as type~$H$ and potentially reusable?) and conversion (does the user actually share it?). A natural implementation is LLM-platform collaboration: when an LLM searches a platform such as Stack Overflow in the course of assisting a user and finds no existing answer, this is a signal that the resolution may be novel and reusable. The LLM can then nudge the user to share their solution (e.g., ``no existing answer covers this; would you like to post yours?''). The friction is low because the artifact already exists in the chat.

\subsection{Conversion and the average creation rate}
Under conversion rate $\eta$, the average creation rate in the AI economy shifts to
\begin{equation}
\phi^{\mathrm{AI}}(K;\eta)=\phi^{\mathrm{AI}}(K;0)+\Delta(1-\pi)\eta\,\frac{a_H^{\mathrm{AI}}(K)}{K}.
\label{eq:phi_shift}
\end{equation}
The added term is the per-unit-of-$K$ contribution of private AI resolutions that are logged. It is bounded below by a positive constant for sufficiently small $K$ (because $a_H^{\mathrm{AI}}(K)/K$ is bounded away from zero near the origin; see the assumption below), and it vanishes as $K\to\infty$ because $a_H^{\mathrm{AI}}$ is bounded while $K$ grows. 
This asymmetry is what gives conversion its leverage at the low-archive margin: precisely where the platform is most vulnerable to the self-undermining feedback of Proposition~\ref{prop:shrinkage}, the conversion term provides a non-vanishing boost to average creation.

\begin{proposition}[Conversion and expansion of the viable region]
\label{prop:conversion}
Maintain the conditions of Proposition~\ref{prop:shrinkage} for $\phi^{\mathrm{AI}}(K;0)$. Suppose $a_H^{\mathrm{AI}}(K)$ is continuous on $(0,\infty)$, strictly positive for all $K>0$, and that $a_H^{\mathrm{AI}}(K)/K$ is bounded below by a positive constant for sufficiently small $K$.
Then:
\begin{enumerate}[(i)]
\item For any $\eta>0$, $\phi^{\mathrm{AI}}(K;\eta)>\phi^{\mathrm{AI}}(K;0)$ for all $K>0$: conversion strictly raises average creation.
\item There exists a critical conversion rate $\bar\eta>0$ such that for all $\eta>\bar\eta$ the low-archive basin is eliminated: $\phi^{\mathrm{AI}}(K;\eta)>\lambda$ for all $K\in(0,K_U^{\mathrm{AI}}]$. If $\bar\eta\le 1$, the basin can be eliminated by an implementable conversion rate. If $\bar\eta>1$, even perfect conversion ($\eta=1$) is insufficient and complementary instruments are needed to sustain the archive. The critical rate is
\begin{equation}
\bar\eta=\sup_{K\in(0,K_U^{\mathrm{AI}}]} \frac{[\lambda-\phi^{\mathrm{AI}}(K;0)]_+}{\Delta(1-\pi)\,a_H^{\mathrm{AI}}(K)/K}.
\label{eq:eta_bar}
\end{equation}
\item $\bar\eta$ is weakly decreasing in pointwise increases of $a_H^{\mathrm{AI}}(K)/K$ on $(0,K_U^{\mathrm{AI}}]$: the better AI is at resolving $H$-type queries privately when the archive is thin, the less conversion is needed to eliminate the structural threshold.
\end{enumerate}
\end{proposition}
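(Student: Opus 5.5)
The plan is to work directly from the additive form \eqref{eq:phi_shift}, which writes $\phi^{\mathrm{AI}}(K;\eta)=\phi^{\mathrm{AI}}(K;0)+\eta\,g(K)$ with $g(K)\equiv\Delta(1-\pi)\,a_H^{\mathrm{AI}}(K)/K$. An important point is that $\phi^{\mathrm{AI}}(K;0)$ is held fixed as $\eta$ varies. In \eqref{eq:h_ai}, $\eta$ enters only through the added conversion term and does not feed back into the period equilibrium ($q_H^{\mathrm{AI}}$, $\sigma^{\mathrm{AI}}$ are determined by (E1)--(E8), which do not involve $\eta$). I would state this explicitly first, since everything else is then pointwise arithmetic in $\eta$.

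For part (i), I would fix $K>0$ and $\eta>0$. By hypothesis $a_H^{\mathrm{AI}}(K)>0$, and $\Delta>0$ and $\pi<1$, so $g(K)>0$. Hence $\phi^{\mathrm{AI}}(K;\eta)-\phi^{\mathrm{AI}}(K;0)=\eta g(K)>0$.

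For part (ii), I would define $\bar\eta$ by \eqref{eq:eta_bar} and proceed in three steps.

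\emph{Finiteness.} Finiteness of the supremum is the step I expect to be the real obstacle. I would split $(0,K_U^{\mathrm{AI}}]$ into $(0,\varepsilon]$ and $[\varepsilon,K_U^{\mathrm{AI}}]$. On $(0,\varepsilon]$, the hypothesis gives $a_H^{\mathrm{AI}}(K)/K\ge\underline b>0$, and $[\lambda-\phi^{\mathrm{AI}}(K;0)]_+\le\lambda$ because $h\ge0$. The ratio is therefore at most $\lambda/(\Delta(1-\pi)\underline b)$. On the compact interval $[\varepsilon,K_U^{\mathrm{AI}}]$, $g$ is continuous and strictly positive, so it attains a positive minimum. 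The numerator is again bounded by $\lambda$, so the ratio is bounded there too. Hence $\bar\eta<\infty$.

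\emph{Positivity.} Positivity of $\bar\eta$ comes from Assumption~\ref{ass:phi_shape}(a): $\phi^{\mathrm{AI}}(K;0)=0<\lambda$ near zero, so the numerator is positive for small $K$, giving $\bar\eta>0$.

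\emph{Elimination of the basin.} For $\eta>\bar\eta$ and any $K\in(0,K_U^{\mathrm{AI}}]$, there are two cases. If $\phi^{\mathrm{AI}}(K;0)\ge\lambda$, part (i) gives strict inequality. Otherwise the definition of the supremum gives $\lambda-\phi^{\mathrm{AI}}(K;0)\le\bar\eta g(K)<\eta g(K)$, so $\phi^{\mathrm{AI}}(K;\eta)>\lambda$ by \eqref{eq:phi_shift}. The statements about $\bar\eta\le1$ versus $\bar\eta>1$ are then just the observation that $\eta\in[0,1]$ is the feasible range. One caveat I would address is that $K_U^{\mathrm{AI}}$ should be read as the $\eta=0$ threshold from Proposition~\ref{prop:shrinkage}, so the domain of the supremum does not move with $\eta$.

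For part (iii), I would use the same fact: holding $\phi^{\mathrm{AI}}(\cdot;0)$ fixed, a pointwise increase of $a_H^{\mathrm{AI}}(K)/K$ on $(0,K_U^{\mathrm{AI}}]$ weakly lowers each ratio in \eqref{eq:eta_bar}, since the numerators are nonnegative and the denominators rise. The supremum is therefore weakly lower. I would note that this is a ceteris paribus comparative static that treats $\phi^{\mathrm{AI}}(\cdot;0)$ and $K_U^{\mathrm{AI}}$ as fixed. The equilibrium dependence of $q_H^{\mathrm{AI}}$ on $a_H^{\mathrm{AI}}$ is deliberately set aside, in line with the statement's phrasing.
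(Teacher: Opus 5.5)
Your proposal is correct and follows the paper's proof essentially step for step. For (i) you use positivity of the added term. For (ii) you split $(0,K_U^{\mathrm{AI}}]$ into a neighborhood of zero, where the lower bound on $a_H^{\mathrm{AI}}(K)/K$ applies, and a compact piece, where continuity and positivity apply; this bounds the supremum, whose numerator is at most $\lambda$, and you get $\bar\eta>0$ from $\phi^{\mathrm{AI}}(\cdot;0)<\lambda$ near zero. For (iii) you use monotonicity of each ratio in its denominator.

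The $\bar\eta>1$ clause is passed over too quickly. It needs the converse of what you prove, since sufficiency of $\eta>\bar\eta$ says nothing about whether $\eta=1$ fails. The converse is immediate from the definition of the supremum: if $\eta<\bar\eta$, some $K$ satisfies $\lambda-\phi^{\mathrm{AI}}(K;0)>\eta\,g(K)$ with your $g$, and hence $\phi^{\mathrm{AI}}(K;\eta)<\lambda$.
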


The proof is provided in Appendix~\ref{app:proofs}. A sufficient condition for the boundedness assumption is $\liminf_{K\downarrow 0} a_H^{\mathrm{AI}}(K)/K > 0$, which holds for any specification differentiable at zero with positive derivative, including the parametric example $a_H^{\mathrm{AI}}(K)=1-e^{-\rho K}$ (where the ratio $a_H^{\mathrm{AI}}(K)/K\to\rho$). Since $a_H^{\mathrm{AI}}$ is bounded (values in $[0,1]$), the added term $\Delta(1-\pi)\eta\cdot a_H^{\mathrm{AI}}(K)/K$ vanishes as $K\to\infty$; the asymmetry between its non-vanishing behavior near $K=0$ and its decay at large $K$ is what gives conversion its leverage at the low-archive margin. The critical rate $\bar\eta$ in \eqref{eq:eta_bar} is the pointwise worst-case ratio of the basin shortfall to the conversion leverage: it equals zero when $\phi^{\mathrm{AI}}(K;0)\ge\lambda$ throughout the threshold region (no basin to fill) and is large when the shortfall is deep relative to $a_H^{\mathrm{AI}}(K)/K$. Whether $\bar\eta\le 1$ depends on this depth-to-leverage ratio: when the basin is shallow, moderate conversion suffices; when the self-undermining feedback is severe, even full conversion cannot offset the gap, and the platform requires resolution-side instruments to survive. This reinforces the paper's central message that the two margins require different instruments: logging alone may be insufficient when the contributor pool has collapsed.
 
Part~(iii) contains an important implication: the conversion threshold $\bar\eta$ is lower in domains where AI is already good at resolving knowledge-enhancing problems privately even without a rich archive. This is precisely the case where the flow margin is largest (much of $H$-type creation happens off-platform), so there is a large stock of private resolutions available for logging. In other words, the domains most damaged by the flow margin are also the domains where logging is most effective at lowering the structural threshold.

\begin{figure}[t]
\centering
\includegraphics[width=0.75\textwidth]{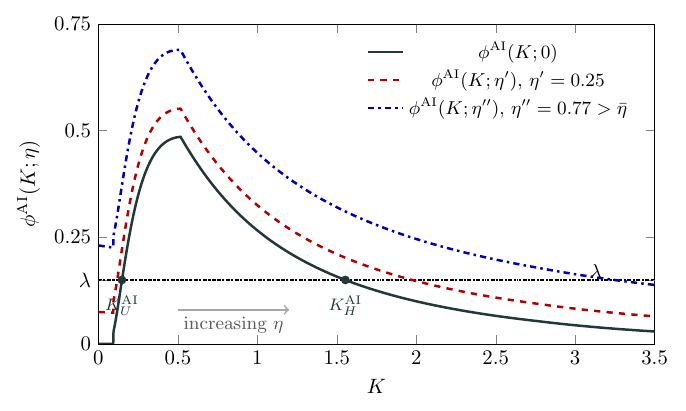}
\caption{Effect of conversion rate $\eta$ on the average knowledge creation rate $\phi^{\mathrm{AI}}(K;\eta)$ for the parametric example in Appendix~\ref{app:parametric}. Solid: no conversion ($\eta=0$). Dashed: moderate conversion ($\eta'=0.25<\bar\eta$). Dash-dotted: high conversion ($\eta''=0.77>\bar\eta\approx 0.51$), which eliminates the low-archive basin.}\label{fig:eta_shifts}
\end{figure}

\subsection{Scope and limits of conversion}
\label{sec:eta_limits}
Conversion offsets the decline associated with private diversion by logging resolutions that would otherwise never enter the public archive. But it does not, by itself, repair congestion- or participation-driven deterioration in conditional resolution. Even at $\eta=1$, questions that reach the platform and go unanswered because the contributor pool has thinned represent a knowledge loss that conversion cannot recover.
The two components of the decline identified in Corollary~\ref{cor:decomp} respond to different instruments: $\eta$ targets private diversion, while sustaining conditional resolution requires maintaining contributor engagement through complementary instruments such as reputation systems, visibility rewards, community norms, and platform design that preserves the return to volunteering.

This scope limitation has a concrete empirical implication. If the resolution margin accounts for a substantial share of the total decline in archive creation, as preliminary evidence from Stack Overflow suggests, then conversion alone would be insufficient to restore the pre-AI creation rate. A platform that implements high $\eta$ but allows its contributor pool to thin would log more private resolutions while losing the capacity to resolve the harder, knowledge-enhancing problems that reach the platform. The two instruments are complements, not substitutes.

Lastly, the model does not claim that increasing $\eta$ is costless or unambiguously welfare improving. Real implementations may entail privacy costs (some private queries contain sensitive information), intellectual property costs (firms may not want problem-solving strategies public), curation costs (unfiltered AI outputs can degrade the archive rather than sustain it), and compliance or legislative costs. The model isolates the archive-stock externality and identifies $\eta$ as the lever that addresses the flow margin; optimal policy design that balances these costs against archive preservation is left to future work.

\section{Discussion}
\label{sec:discussion}
The diffusion of AI changes how open knowledge platforms generate and sustain public archives. The private benefit is immediate, as discussed. The social risk studied here, on the other hand, is narrower and dynamic: when AI strengthens private outside options on both the user side and the contributor side, the flow of publicly logged solutions can fall enough to reduce the long-run archive stock that supports future problem-solving. This is one concrete instance of the broader ``AI dilemma'' emphasized by \textcite{jones2024}: private gains can coexist with long-run risks that are hard to price.

We turn to a more detailed comparison with \textcite{acemoglu2026}, as previewed in the introduction. Our paper shares the broad concern that AI-induced private efficiency can reduce the stock of open knowledge, but the mechanisms and policy implications differ. Their framework operates through a single aggregate effort variable and does not distinguish the posted-flow margin from the resolution margin; the policy instrument that emerges from their analysis is garbling (deliberately degrading AI accuracy to incentivize human effort). Our multi-margin decomposition leads to a different instrument: public logging, which preserves full AI capability while converting private resolutions into public artifacts. The distinction matters because garbling restricts the private benefit of AI in order to sustain public knowledge, whereas logging captures a byproduct of private use without reducing its quality. A conceptual difference is that garbling operates within a fixed signal structure (degrading AI accuracy to restore effort incentives), whereas logging can in principle capture novel AI-generated representations as easily as standard ones; whether garbling retains its force when AI enables genuinely new ways of framing problems is an open question that their Gaussian signal framework does not address. Neither framework dominates: \textcite{acemoglu2026} provide equilibrium welfare analysis with optimal information design that the present partial-equilibrium model does not attempt, while the two-margin decomposition here identifies platform-level channels that their aggregate framework does not separate. The present model isolates one such channel---the platform-mediated conversion of problem-solving into a reusable public archive---and does not claim that AI reduces overall innovation or economy-wide knowledge; proprietary R\&D, firm-specific learning, multi-platform competition, and the possibility that knowledge lost from public platforms migrates to proprietary repositories all lie outside its scope. The conversion parameter $\eta$ should accordingly be interpreted as conversion into public artifacts (documentation, answers, curated posts), not as ``training use'' of private interactions unless a separate mapping is explicitly defined.

The model's architecture, however, abstracts beyond Q\&A platforms: strip away the platform-specific language and what remains is stochastic arrival of problems at a knowledge frontier, a matching technology between problems and solvers, a public archive that accumulates from resolved problems and lowers future resolution costs, an outside option competing for solvers' attention, and AI entering by improving that outside option and diverting problems to private resolution. This architecture applies, with different incentive structures, to academic research (where the archive is the published literature and depreciation reflects methodological obsolescence), publicly funded policy analysis (where the archive is the stock of reports, briefs, and public datasets), and private R\&D (where the archive is partly proprietary and returns are internalized through intellectual property).

The model's fit varies across these settings, and two dimensions of variation matter most. The first is the incentive to publicize: academic priority and tenure norms structurally attenuate the flow margin by making ``posting'' (publication) partially decoupled from the outside option, whereas Q\&A platforms and wikis lack such structural incentives and are correspondingly more exposed. 
 The second is the severity of the congestion externality: firms coordinate research assignments internally, attenuating the matching friction that drives the resolution margin in the present model, while volunteer-based platforms bear the full force of uncoordinated exit. Private R\&D is a limiting case in which the externality structure is fundamentally different: as the archive is partly proprietary, shrinkage would require a mechanism other than the congestion externality modeled here.
 On creative platforms, the architecture faces additional limitations that trace back to a deliberate modeling choice: the answering cost $c$ is independent of the ability parameter $\alpha$ that governs exit. The baseline model therefore isolates the extensive-margin contribution channel: AI-induced outside-option improvements reduce participation and shrink the contributor pool, which reduces match rates when capacity is scarce (Proposition~\ref{prop:resolution}). It does not model differential on-platform productivity among remaining contributors, so it is silent on whether contributor exit also thins expertise or lowers answer quality conditional on participation. \textcite{kim2026} document reduced human creator activity on Pixiv following generative AI adoption, with the largest effects among the most productive incumbents; \textcite{li2025} find that high-activity answerers on Stack Overflow disproportionately reduce contributions after ChatGPT's release. Both patterns are consistent with the supply-side pool-thinning mechanism in the present model, but they may also reflect quality-composition effects that the baseline does not capture. Extending the model to allow ability-dependent on-platform productivity would be needed to speak to quality thinning, whether on creative platforms or on Q\&A platforms where expert answerers are disproportionately valuable.
 
 Contributors who exit public platforms, moreover, do not disappear: they may produce knowledge in proprietary or academic channels, so that what the platform loses need not represent a net reduction in economy-wide knowledge creation. Whether platform-level shrinkage aggregates into an economy-wide knowledge decline depends on whether the channels most exposed to both margins are also the channels that matter most for cumulative creation, and on whether diverted effort substitutes into channels that produce equivalent public value. The model identifies the two margins that any cross-channel investigation would need to measure, but it does not predict the sign of the aggregate; the cross-channel implications are a question for future work, not a conclusion of this paper.

\printbibliography
\pagebreak
\appendix

\appendix

\appendix

\section{Proofs}
\label{app:proofs}

\subsection{Proof of Lemma~\ref{lem:composition}}

The expected knowledge increment is
\[
\bar\Delta^e(K)=\frac{(1-\pi)\,q_H^{e}(K)\cdot\Delta}{\pi\,q_L^{e}(K)+(1-\pi)\,q_H^{e}(K)},
\]
which equals $\Delta$ times the $H$-share of the posted pool. Writing the $H$-share as $(1-\pi)q_H^e/(\pi q_L^e+(1-\pi)q_H^e)$, this is increasing in $q_H^e/q_L^e$ (immediate from the derivative of the fraction with respect to the ratio). Since $c^{*,e}(K)=\max\{0,\,\beta\bar\Delta^e(K)+u-C(K)\}$ is weakly increasing in $\bar\Delta^e$, we have $c^{*,\mathrm{AI}}(K)\ge c^{*,\mathrm{HO}}(K)$ iff $\bar\Delta^{\mathrm{AI}}(K)\ge\bar\Delta^{\mathrm{HO}}(K)$, iff the posted pool is weakly more $H$-rich under AI.

For the sufficient condition, hold escalation probabilities fixed: $m_\theta^{\mathrm{AI}}=m_\theta^{\mathrm{HO}}\equiv m_\theta$ for both $\theta$. Then $q_\theta^e=m_\theta\cdot(1-a_\theta^e)$, so
\[
\frac{q_H^e}{q_L^e}=\frac{m_H}{m_L}\cdot\frac{1-a_H^e}{1-a_L^e}.
\]
With $m_H/m_L$ held fixed across environments, the ratio $q_H^e/q_L^e$ rises under AI iff
\[
\frac{1-a_H^{\mathrm{AI}}}{1-a_L^{\mathrm{AI}}}>\frac{1-a_H^{\mathrm{HO}}}{1-a_L^{\mathrm{HO}}},
\]
which rearranges to \eqref{eq:diff_capability}. \qed

\subsection{Proof of Proposition~\ref{prop:resolution}}

Under congestion ($\Psi(\alpha^{*,e})<Q^e$) in both environments and $T=1$, the lifetime resolution probability is $\sigma^e(K)=(\Psi(\alpha^{*,e}(K))/Q^e(K))\cdot F(c^{*,e}(K))$. The ratio of resolution probabilities factors as
\[
\frac{\sigma^{\mathrm{AI}}(K)}{\sigma^{\mathrm{HO}}(K)}=\underbrace{\frac{\Psi(\alpha^{*,\mathrm{AI}}(K))}{\Psi(\alpha^{*,\mathrm{HO}}(K))}}_{\text{pool ratio}}\cdot\underbrace{\frac{Q^{\mathrm{HO}}(K)}{Q^{\mathrm{AI}}(K)}}_{\text{congestion relief}}\cdot\underbrace{\frac{F(c^{*,\mathrm{AI}}(K))}{F(c^{*,\mathrm{HO}}(K))}}_{\text{composition}}.
\]
The first factor is less than one because AI raises outside options ($w^{\mathrm{AI}}>w^{\mathrm{HO}}$), which lowers $\alpha^{*,e}$ and hence $\Psi(\alpha^{*,e})$. The second factor is weakly greater than one because AI improves private resolution ($a_\theta^{\mathrm{AI}}\ge a_\theta^{\mathrm{HO}}$), which reduces posted flow $Q^e$; in the congested regime, lower $Q$ raises the match probability. The third factor is weakly greater than one under the conditions of Lemma~\ref{lem:composition} (the posted pool becomes more $H$-rich, raising $\bar\Delta^e$ and hence $c^{*,e}$ and $F(c^{*,e})$). Resolution falls ($\sigma^{\mathrm{AI}}<\sigma^{\mathrm{HO}}$) iff the product is less than one, which is equivalent to
\[
\frac{\Psi(\alpha^{*,\mathrm{AI}}(K))}{\Psi(\alpha^{*,\mathrm{HO}}(K))}\cdot\frac{F(c^{*,\mathrm{AI}}(K))}{F(c^{*,\mathrm{HO}}(K))}<\frac{Q^{\mathrm{AI}}(K)}{Q^{\mathrm{HO}}(K)},
\]
as stated in \eqref{eq:resolution_race}. \qed

\subsection{Proof of Proposition~\ref{prop:dynamic_crowdout}}

Maintain $\eta=0$. At the human-only steady state, $h^{\mathrm{HO}}(K^{\mathrm{HO}})=\lambda K^{\mathrm{HO}}$. The condition \eqref{eq:h_decline} gives $h^{\mathrm{AI}}(K^{\mathrm{HO}};0)<h^{\mathrm{HO}}(K^{\mathrm{HO}})=\lambda K^{\mathrm{HO}}$, so
\(h^{\mathrm{AI}}(K^{\mathrm{HO}};0)-\lambda K^{\mathrm{HO}}<0.\) At $K=0$, $h^{\mathrm{AI}}(0;0)\ge 0$ (because $q_H^{\mathrm{AI}}(0)\ge 0$ and $\sigma^{\mathrm{AI}}(0)\ge 0$), so $h^{\mathrm{AI}}(0;0)-\lambda\cdot 0=h^{\mathrm{AI}}(0;0)\ge 0$. If $h^{\mathrm{AI}}(0;0)=0$, then $K=0$ is a steady state of the AI economy and the claim holds with $K^{\mathrm{AI}}=0<K^{\mathrm{HO}}$. If $h^{\mathrm{AI}}(0;0)>0$, define $\varphi(K)\equiv h^{\mathrm{AI}}(K;0)-\lambda K$. We have $\varphi(0)>0$ and $\varphi(K^{\mathrm{HO}})<0$. Under the maintained assumption that $h^{\mathrm{AI}}(\cdot\,;0)$ is continuous, $\varphi$ is continuous on $[0,K^{\mathrm{HO}}]$. By the intermediate value theorem, there exists $K^{\mathrm{AI}}\in(0,K^{\mathrm{HO}})$ such that $\varphi(K^{\mathrm{AI}})=0$, i.e., $h^{\mathrm{AI}}(K^{\mathrm{AI}};0)=\lambda K^{\mathrm{AI}}$. This is a steady state of the AI economy with $K^{\mathrm{AI}}<K^{\mathrm{HO}}$. \qed

\subsection{Proof of Corollary~\ref{cor:decomp}}

Under $\eta=0$, $h^e(K;0)=\Delta(1-\pi)\,q_H^e(K)\,\sigma^e(K)$ for both $e\in\{\mathrm{HO},\mathrm{AI}\}$. Therefore $h^{\mathrm{AI}}(K^{\mathrm{HO}};0)<h^{\mathrm{HO}}(K^{\mathrm{HO}})$ is equivalent to $q_H^{\mathrm{AI}}(K^{\mathrm{HO}})\,\sigma^{\mathrm{AI}}(K^{\mathrm{HO}})<q_H^{\mathrm{HO}}(K^{\mathrm{HO}})\,\sigma^{\mathrm{HO}}(K^{\mathrm{HO}})$. For the decomposition, write $\Delta q_H\equiv q_H^{\mathrm{HO}}-q_H^{\mathrm{AI}}$, $\Delta\sigma\equiv \sigma^{\mathrm{HO}}-\sigma^{\mathrm{AI}}$, $\bar q_H\equiv (q_H^{\mathrm{HO}}+q_H^{\mathrm{AI}})/2$, and $\bar\sigma\equiv (\sigma^{\mathrm{HO}}+\sigma^{\mathrm{AI}})/2$. Expanding:
\begin{align*}
\bar\sigma\,\Delta q_H+\bar q_H\,\Delta\sigma &= \tfrac{1}{2}(\sigma^{\mathrm{HO}}+\sigma^{\mathrm{AI}})(q_H^{\mathrm{HO}}-q_H^{\mathrm{AI}})+\tfrac{1}{2}(q_H^{\mathrm{HO}}+q_H^{\mathrm{AI}})(\sigma^{\mathrm{HO}}-\sigma^{\mathrm{AI}})\\
&=\tfrac{1}{2}(q_H^{\mathrm{HO}}\sigma^{\mathrm{HO}}-q_H^{\mathrm{AI}}\sigma^{\mathrm{HO}}+q_H^{\mathrm{HO}}\sigma^{\mathrm{AI}}-q_H^{\mathrm{AI}}\sigma^{\mathrm{AI}})\\
&\quad+\tfrac{1}{2}(q_H^{\mathrm{HO}}\sigma^{\mathrm{HO}}-q_H^{\mathrm{HO}}\sigma^{\mathrm{AI}}+q_H^{\mathrm{AI}}\sigma^{\mathrm{HO}}-q_H^{\mathrm{AI}}\sigma^{\mathrm{AI}})\\
&=q_H^{\mathrm{HO}}\sigma^{\mathrm{HO}}-q_H^{\mathrm{AI}}\sigma^{\mathrm{AI}},
\end{align*}
where the cross terms cancel. The decomposition is an accounting identity: the flow margin $\bar\sigma\,\Delta q_H$ evaluates the effect of the posted-flow decline at the cross-regime average resolution rate, and the resolution margin $\bar q_H\,\Delta\sigma$ evaluates the effect of the resolution decline at the cross-regime average posted flow. Under midpoint weighting, these two terms exhaust the total decline with no residual interaction term. \qed

\subsection{Proof of Proposition~\ref{prop:shrinkage}}

Maintain $\eta=0$. For $K>0$, write $h^{\mathrm{AI}}(K;0)-\lambda K=K\big(\phi^{\mathrm{AI}}(K)-\lambda\big)$ where $\phi^{\mathrm{AI}}(K)\equiv h^{\mathrm{AI}}(K;0)/K$.

\emph{Step~1.} Under Assumption~\ref{ass:phi_shape}(a), $C(0)>\beta\Delta+u$. Since $\bar\Delta^e(K)\le\Delta$ for all $K$ and $C$ is continuous, there exists $\varepsilon_1>0$ such that $\beta\bar\Delta^e(K)+u-C(K)<0$ for all $K\in[0,\varepsilon_1)$, hence $c^{*,e}(K)=0$, $F(c^{*,e})=0$, $\sigma^e=0$, $U_\theta^e=0$, $m_\theta^e=\Gamma_\theta(0)=0$ (since $\Gamma_\theta$ admits a density), $Q^e=0$, and $h^{\mathrm{AI}}(K;0)=0$. Therefore $\phi^{\mathrm{AI}}(K)=0<\lambda$ for $K\in(0,\varepsilon_1)$.

\emph{Step~2.} By Assumption~\ref{ass:phi_shape}(b), there exists $\hat K>0$ with $\phi^{\mathrm{AI}}(\hat K)>\lambda$, so $h^{\mathrm{AI}}(\hat K;0)-\lambda\hat K>0$.

\emph{Step~3.} Under Assumption~\ref{ass:phi_shape}(c), $w^{\mathrm{AI}}(\underline\alpha,K;\pi)\to\infty$ as $K\to\infty$. Since $S^e(K)\le\Pi^e(K)\le\bar c$ is bounded, for all sufficiently large $K$ we have $w^{\mathrm{AI}}(\underline\alpha,K;\pi)>S^e(K)$, so $\alpha^{*,e}(K)<\underline\alpha$ and $\Psi(\alpha^{*,e})=0$. With no contributors, $\mu^e=0$, $\sigma^e=0$, $h^{\mathrm{AI}}=0$, and $\phi^{\mathrm{AI}}(K)=0<\lambda$.

\emph{Step~4.} By continuity of $\phi^{\mathrm{AI}}$ on $(0,\infty)$ and the intermediate value theorem applied on $(\varepsilon_1,\hat K)$, the set $\{K\in(\varepsilon_1,\hat K):\phi^{\mathrm{AI}}(K)=\lambda\}$ is nonempty. Define $K_U^{\mathrm{AI}}\equiv\inf\{K\in(\varepsilon_1,\hat K):\phi^{\mathrm{AI}}(K)=\lambda\}$. Since $\phi^{\mathrm{AI}}$ is continuous, $\phi^{\mathrm{AI}}(K_U^{\mathrm{AI}})=\lambda$, i.e., $h^{\mathrm{AI}}(K_U^{\mathrm{AI}};0)=\lambda K_U^{\mathrm{AI}}$. By Step~3, there exists $\bar K>\hat K$ with $\phi^{\mathrm{AI}}(\bar K)<\lambda$. Similarly, $\{K\in(\hat K,\bar K):\phi^{\mathrm{AI}}(K)=\lambda\}$ is nonempty; define $K_H^{\mathrm{AI}}\equiv\sup\{K\in(\hat K,\bar K):\phi^{\mathrm{AI}}(K)=\lambda\}$. By continuity, $\phi^{\mathrm{AI}}(K_H^{\mathrm{AI}})=\lambda$.

By construction, $0<K_U^{\mathrm{AI}}<\hat K<K_H^{\mathrm{AI}}$. The extremal definitions determine the sign pattern of $\phi^{\mathrm{AI}}(K)-\lambda$:
\begin{itemize}
\item \emph{$K_U^{\mathrm{AI}}$ unstable from below.} For $K\in(\varepsilon_1,K_U^{\mathrm{AI}})$: $\phi^{\mathrm{AI}}(K)\ne\lambda$ (by infimum) and $\phi^{\mathrm{AI}}(\varepsilon_1)<\lambda$ (Step~1), so by continuity $\phi^{\mathrm{AI}}(K)<\lambda$ on this interval. Hence $h^{\mathrm{AI}}(K;0)<\lambda K$ and the archive shrinks toward zero.
\item \emph{$K_U^{\mathrm{AI}}$ unstable from above.} Since $\phi^{\mathrm{AI}}(\hat K)>\lambda$ and $K_U^{\mathrm{AI}}<\hat K$, the continuous function $\phi^{\mathrm{AI}}$ exceeds $\lambda$ at $\hat K$. If $\phi^{\mathrm{AI}}(K)\le\lambda$ for all $K\in(K_U^{\mathrm{AI}},\hat K)$, then $\phi^{\mathrm{AI}}(\hat K)\le\lambda$ by continuity, a contradiction. So there exists $K'\in(K_U^{\mathrm{AI}},\hat K)$ with $\phi^{\mathrm{AI}}(K')>\lambda$, and trajectories starting slightly above $K_U^{\mathrm{AI}}$ move away.
\item \emph{$K_H^{\mathrm{AI}}$ stable from above.} For $K\in(K_H^{\mathrm{AI}},\bar K)$: $\phi^{\mathrm{AI}}(K)\ne\lambda$ (by supremum) and $\phi^{\mathrm{AI}}(\bar K)<\lambda$ (Step~3), so by continuity $\phi^{\mathrm{AI}}(K)<\lambda$ on this interval. Hence $h^{\mathrm{AI}}(K;0)<\lambda K$ and the archive shrinks toward $K_H^{\mathrm{AI}}$.
\item \emph{$K_H^{\mathrm{AI}}$ stable from below.} Since $\phi^{\mathrm{AI}}(\hat K)>\lambda$ and $K_H^{\mathrm{AI}}>\hat K$, by a symmetric argument there exists $K''\in(\hat K,K_H^{\mathrm{AI}})$ with $\phi^{\mathrm{AI}}(K'')>\lambda$, and trajectories starting slightly below $K_H^{\mathrm{AI}}$ move toward it.
\end{itemize}

If $K_t<K_U^{\mathrm{AI}}$, then $\phi^{\mathrm{AI}}(K_t)<\lambda$ (by the first bullet), so $K_{t+1}=(1-\lambda)K_t+h^{\mathrm{AI}}(K_t;0)<K_t$: the archive declines. By induction, the trajectory is monotonically decreasing and converges to zero or a lower steady state. \qed

\subsection{Proof of Proposition~\ref{prop:conversion}}
 
Under $\eta>0$, average creation in the AI economy is
\[
\phi^{\mathrm{AI}}(K;\eta)
  =\phi^{\mathrm{AI}}(K;0)
  +\Delta(1-\pi)\eta\,\frac{a_H^{\mathrm{AI}}(K)}{K}.
\]
 
\emph{Part~(i).} For any $\eta>0$ and any $K>0$, the added term $\Delta(1-\pi)\eta\,a_H^{\mathrm{AI}}(K)/K>0$ because $a_H^{\mathrm{AI}}(K)>0$ by assumption. Hence $\phi^{\mathrm{AI}}(K;\eta)>\phi^{\mathrm{AI}}(K;0)$.
 
\emph{Part~(ii).} By the assumption that $a_H^{\mathrm{AI}}(K)/K$ is bounded below by a positive constant for sufficiently small $K$, there exist $\varepsilon>0$ and $\ell>0$ such that $a_H^{\mathrm{AI}}(K)/K\ge\ell$ for all $K\in(0,\varepsilon]$. Hence the added term satisfies $\Delta(1-\pi)\eta\,a_H^{\mathrm{AI}}(K)/K \ge\Delta(1-\pi)\eta\,\ell>0$ for all $K\in(0,\varepsilon]$: it is bounded below by a positive constant for sufficiently small $K$. As $K\to\infty$, $a_H^{\mathrm{AI}}(K)/K\to 0$ because the numerator is bounded (values in $[0,1]$), so the added term vanishes.

By Proposition~\ref{prop:shrinkage}, under $\eta=0$ there exists an unstable threshold $K_U^{\mathrm{AI}}$ with $\phi^{\mathrm{AI}}(K_U^{\mathrm{AI}};0)=\lambda$, and $\phi^{\mathrm{AI}}(K;0)<\lambda$ for all $K\in(0,K_U^{\mathrm{AI}})$ (by the proof of Proposition~\ref{prop:shrinkage}, Step~4). Eliminating the low-archive basin requires $\phi^{\mathrm{AI}}(K;\eta)>\lambda$ for all $K\in(0,K_U^{\mathrm{AI}}]$. Since
\[
\phi^{\mathrm{AI}}(K;\eta)>\lambda
\quad\Longleftrightarrow\quad
\eta>\frac{[\lambda-\phi^{\mathrm{AI}}(K;0)]_+}
          {\Delta(1-\pi)\,a_H^{\mathrm{AI}}(K)/K},
\]
the basin is eliminated for all $K$ simultaneously iff $\eta>\bar\eta$, where
\[
\bar\eta\equiv\sup_{K\in(0,K_U^{\mathrm{AI}}]}
\frac{[\lambda-\phi^{\mathrm{AI}}(K;0)]_+}
     {\Delta(1-\pi)\,a_H^{\mathrm{AI}}(K)/K}.
\]
The supremum is well-defined and finite because the numerator is bounded (it is at most $\lambda$, since $\phi^{\mathrm{AI}}\ge 0$) and the denominator is bounded away from zero on $(0,K_U^{\mathrm{AI}}]$: on $(0,\varepsilon]$ by the bound above, and on $[\varepsilon,K_U^{\mathrm{AI}}]$ because $a_H^{\mathrm{AI}}(K)>0$ and $K\ge\varepsilon>0$ on a compact set. Moreover $\bar\eta>0$ because $\phi^{\mathrm{AI}}(K;0)<\lambda$ for some $K\in(0,K_U^{\mathrm{AI}})$, so the numerator is strictly positive at that $K$.

If $\bar\eta\le 1$, the basin can be eliminated by setting $\eta>\bar\eta$. If $\bar\eta>1$, even $\eta=1$ leaves some $K\in(0,K_U^{\mathrm{AI}}]$ with $\phi^{\mathrm{AI}}(K;1)\le\lambda$, and complementary instruments addressing the resolution margin are necessary.

\emph{Part~(iii).} The critical rate $\bar\eta$ is defined as a supremum of ratios in which $a_H^{\mathrm{AI}}(K)/K$ appears in the denominator. A pointwise increase in $a_H^{\mathrm{AI}}(K)/K$ on $(0,K_U^{\mathrm{AI}}]$ weakly reduces each ratio and therefore weakly reduces the supremum. Hence $\bar\eta$ is weakly decreasing in pointwise increases of $a_H^{\mathrm{AI}}(K)/K$ on the threshold region. \qed

\section{General Query Lifetime $T>1$}
\label{app:T_extension}

The baseline model sets $T=1$: posted queries are either resolved within the period they are posted or they expire. This appendix develops the general case of finite $T>1$, in which posted queries remain active for $T$ periods and expire if unresolved after $T$ periods due to loss of visibility, obsolescence, or platform closure rules. The qualitative results carry through; what changes is that the stock of active queries exceeds the per-period inflow, congestion is correspondingly more severe, and the equilibrium match probability is determined by a scalar fixed point rather than a closed-form expression.

\subsection{Per-period hazard and lifetime resolution}

Under general $T$, the platform treats each match as a one-time opportunity: the matched agent makes a one-shot answering decision (the platform does not rematch a query to an agent who has previously declined it), so the answering cutoff remains $c^{*,e}(K)=\max\{0,\,\beta\bar\Delta^e(K)+u-C(K)\}$ as in the baseline. For a given match probability $\mu$, the per-period resolution hazard for an active query is
\begin{equation}
\rho^e(K)=\mu^e(K)\cdot F\!\big(c^{*,e}(K)\big),
\label{eq:rho_general}
\end{equation}
and the lifetime resolution probability over the $T$-period window is
\begin{equation}
\sigma^e(K)=1-\big(1-\rho^e(K)\big)^T.
\label{eq:sigma_general}
\end{equation}
At $T=1$, $\sigma^e=\rho^e$ and the general model collapses to the baseline.

\subsection{Steady-state stock and the congestion fixed point}

In steady state with constant inflow $Q^e$ and constant hazard $\rho$, the stock of active queries is the sum of surviving cohorts:
\begin{equation}
X^e(K)=Q^e(K)\cdot\sum_{j=0}^{T-1}\big(1-\rho^e(K)\big)^j=Q^e(K)\cdot\frac{1-\big(1-\rho^e(K)\big)^T}{\rho^e(K)},
\label{eq:X_stock}
\end{equation}
with the natural limit $X^e(K)=T\cdot Q^e(K)$ when $\rho^e(K)=0$ (no queries are ever resolved, so all $T$ cohorts survive). However, $\rho^e$ depends on $\mu^e$, and $\mu^e$ depends on $X^e$:
\begin{equation}
\mu^e(K)=\min\!\bigg\{1,\;\frac{\Psi(\alpha^{*,e}(K))}{X^e(K)}\bigg\},\qquad \rho^e(K)=\mu^e(K)\cdot F\!\big(c^{*,e}(K)\big).
\label{eq:mu_X}
\end{equation}
The stock $X^e$ is therefore not computed directly from \eqref{eq:X_stock}; it is determined jointly with $\mu^e$ as a scalar fixed point. Define
\[
X(\mu)\equiv Q^e\cdot\sum_{j=0}^{T-1}\big(1-\mu\,F(c^{*,e})\big)^j.
\]
The fixed-point condition is
\begin{equation}
\mu=\min\!\bigg\{1,\;\frac{\Psi(\alpha^{*,e})}{X(\mu)}\bigg\},
\label{eq:mu_fp}
\end{equation}
which is a scalar fixed point in $\mu$ alone; once $\mu$ is determined, $\rho$, $X$, and $\sigma$ follow mechanically from their definitions ($\rho=\mu\,F(c^{*,e})$, $X=X(\mu)$, $\sigma=1-(1-\rho)^T$). The circularity is between $\mu$ and $X$: $\mu$ depends on $X$ through the congestion ratio $\Psi/X$, and $X$ depends on $\mu$ through the survival sum. Under $T>1$, this congestion fixed point constitutes an innermost loop nested inside the inner loop of Section~\ref{sec:equilibrium}: given the participation cutoff $\alpha^*$ and posted flows $(q_L,q_H)$, the innermost loop solves \eqref{eq:mu_fp} for $\mu$, which then feeds back into the inner loop through $\sigma$. Under $T=1$, the survival sum collapses to $X=Q$ and the fixed point reduces to the closed form $\mu=\min\{1,\,\Psi/Q\}$, so no innermost loop is needed. Existence of the fixed point under $T>1$ follows from a standard argument: $X(\mu)$ is decreasing in $\mu$ (higher hazard resolves queries faster, reducing the stock), so the right-hand side of \eqref{eq:mu_fp} is weakly increasing in $\mu$ in the congested regime ($\Psi<X$).

\subsection{Cohort identity}

In steady state with constant inflow and constant hazard, the per-period number of resolved queries equals the stock of active queries times the per-period hazard:
\begin{equation}
X^e(K)\cdot\rho^e(K)=Q^e(K)\cdot\sigma^e(K).
\label{eq:cohort_identity}
\end{equation}
This identity follows from substituting \eqref{eq:X_stock} and \eqref{eq:sigma_general}: $X\rho=Q\cdot\frac{1-(1-\rho)^T}{\rho}\cdot\rho=Q\cdot(1-(1-\rho)^T)=Q\sigma$. The identity is a steady-state accounting relationship, not a general flow-to-stock equation; it holds when $\rho$ is constant over query lifetimes and inflow is constant.

The cohort identity implies that stationary knowledge creation can be written as
\begin{equation}
h^{\mathrm{AI}}(K;\eta)=\Delta(1-\pi)\Big[q_H^{\mathrm{AI}}(K)\cdot\sigma^{\mathrm{AI}}(K)+\eta\cdot a_H^{\mathrm{AI}}(K)\Big],
\label{eq:h_general}
\end{equation}
which has the same form as the baseline \eqref{eq:h_ai}. The equality $q_H\cdot\sigma=X_H\cdot\rho$ ensures that the decomposition of knowledge creation into a platform-resolution term and a conversion term carries through to the general case. This extension is therefore a stationary comparative-statics exercise: it characterizes how long-run outcomes vary with parameters at a fixed $K$, under the assumption that $K$ does not change appreciably over a query's $T$-period lifetime. The assumption is needed because the constant-hazard survival sum $X(\mu)=Q\sum_{j=0}^{T-1}(1-\rho)^j$ requires that the per-period hazard $\rho$ remain the same across all $T$ periods of a query's active window. Since $\rho$ depends on the equilibrium objects $\mu$ and $c^{*}$, which in turn depend on $K$, a non-negligible change in $K$ within the window would make the hazard time-varying and invalidate the geometric sum. The assumption is innocuous when $T$ is small relative to the speed of archive dynamics (the net change $|h^e(K)-\lambda K|$ per period is small compared to $K$), but could bind if $T$ is large enough that the archive evolves meaningfully within a single query's lifetime. The baseline dynamic model ($T=1$) does not require this assumption because the stock equals the flow, queries do not persist across periods, and no cohort accounting is needed.

\subsection{Congestion amplification}

The stock $X^e$ exceeds the flow $Q^e$ whenever $\rho<1$ (some queries persist beyond their first period), so the congestion ratio $\Psi/X$ is more severe than $\Psi/Q$. Intuitively, a backlog of unresolved queries from previous cohorts competes with new arrivals for the same contributor pool. This captures the empirical regularity that platforms with large stocks of unanswered questions face worse congestion even at moderate inflow rates.

The qualitative comparative statics from the baseline carry through. Pool thinning (lower $\alpha^{*,e}$, hence lower $\Psi(\alpha^{*,e})$) lowers $\mu^e$ (now computed against the stock $X^e$ rather than the flow $Q^e$), which lowers $\rho^e$ and therefore $\sigma^e$. The composition channel (Lemma~\ref{lem:composition}) still operates through $c^{*,e}$, which enters $\rho^e$ via $F(c^{*,e})$. The $\pi$-vulnerability prediction (Section~\ref{sec:pi_vulnerability}) is unchanged: both the flow margin and the resolution margin scale with $\pi$ through the same mechanisms as in the baseline.

The main quantitative difference is that congestion is amplified: for a given inflow $Q^e$ and contributor pool $\Psi(\alpha^{*,e})$, the match probability $\mu^e$ is weakly lower under $T>1$ than under $T=1$ because the stock $X^e\ge Q^e$. This amplification reinforces pool thinning: any reduction in $\Psi$ has a larger effect on $\mu$ when the denominator is $X$ rather than $Q$. At $T=1$, $X^e=Q^e$, $\sigma^e=\rho^e$, and all expressions collapse to the baseline.

\section{Modeling Remarks and Extensions}
\label{app:remarks}

\subsection{Inner-loop multiplicity and uniqueness conditions}
\label{app:multiplicity}

The period equilibrium defined in Section~\ref{sec:equilibrium} involves a nested fixed point. The inner loop, given a participation cutoff $\alpha^*$ (and hence a contributor pool $\Psi(\alpha^*)$), solves for posted flows $(q_L,q_H)$. These depend on $\sigma^e$, which depends on $\mu^e=\min\{1,\,\Psi(\alpha^*)/Q^e\}$ and on $c^{*,e}=\max\{0,\,\beta\bar\Delta^e+u-C(K)\}$, both of which depend on $Q^e$ through $\bar\Delta^e$ and the congestion ratio. This subsection characterizes when the inner loop admits multiple solutions and provides sufficient conditions for uniqueness.

The self-reinforcing loop works as follows. Suppose agents expect the posted pool to be $H$-rich, so that $\bar\Delta^e$ is high. Then $c^{*,e}$ is high, which raises $\sigma^e$, which raises the posting benefit $U_\theta^e=\sigma^e V_\theta$ for both types. Whether the additional posting disproportionately consists of $H$-type queries depends on the \emph{posting semi-elasticities} $\ell_\theta(\sigma)\equiv V_\theta\,\gamma_\theta(\sigma V_\theta)/\Gamma_\theta(\sigma V_\theta)$. The object $\ell_\theta(\sigma)$ measures how responsive type-$\theta$ escalation is to a marginal increase in the resolution probability $\sigma$: it is the elasticity of $m_\theta=\Gamma_\theta(\sigma V_\theta)$ with respect to $\sigma$, scaled by the level of $m_\theta$. When $\ell_H(\sigma)>\ell_L(\sigma)$, a marginal increase in $\sigma$ induces proportionally more $H$-type posting than $L$-type posting, tilting the posted pool toward $H$, which raises $\bar\Delta^e$ and hence $c^{*,e}$ and $\sigma$, completing the self-reinforcing loop. When $\ell_H(\sigma)<\ell_L(\sigma)$, the additional posting is disproportionately $L$-type, which dilutes the pool and is self-correcting. The $H$-share of the posted pool therefore rises with $\sigma$ if and only if $\ell_H(\sigma)>\ell_L(\sigma)$. When the posting-cost distributions are type-dependent ($\Gamma_H\neq\Gamma_L$) with $\Gamma_H$ sufficiently more elastic than $\Gamma_L$ at the relevant evaluation points, the gap $\ell_H-\ell_L$ can be positive, generating self-reinforcing composition feedback and possibly multiple inner-loop fixed points. When both types share the same posting-cost distribution, $V_H>V_L$ alone works \emph{against} this loop: the $H$-type evaluates $\Gamma$ at a higher argument where the density-to-CDF ratio is lower (for log-concave $\Gamma$), so $\ell_H<\ell_L$ and the inner loop is uniquely determined.

The inner loop reduces to a scalar fixed point in $\sigma$. Given $(\alpha^*,K,e)$, fix a candidate resolution probability $\sigma\in[0,1]$. The remaining inner-loop objects follow in sequence:
\begin{enumerate}[(1)]
\item Posting benefit: $U_\theta = \sigma\, V_\theta$ for $\theta\in\{L,H\}$.
\item Escalation: $m_\theta = \Gamma_\theta(U_\theta)$.
\item Posted flows: $q_\theta = m_\theta\,(1 - a_\theta^e(K))$.
\item Total flow: $Q = \pi\, q_L + (1-\pi)\, q_H$.
\item $H$-share and expected increment: $\omega = (1-\pi)\,q_H / Q$ and $\bar\Delta = \Delta\,\omega$.
\item Answering cutoff: $c^* = \max\{0,\;\beta\bar\Delta + u - C(K)\}$.
\item Match probability: $\mu = \min\{1,\;\Psi(\alpha^*)/Q\}$.
\end{enumerate}
The output resolution probability is then $\hat\sigma(\sigma) \equiv \mu\cdot F(c^*)$, and the fixed-point condition is
\begin{equation}\label{eq:sigma_fp}
\sigma = \hat\sigma(\sigma).
\end{equation}
Write $\omega(\sigma)$ for the $H$-share as a function of the candidate $\sigma$. Existence follows from continuity and the intermediate value theorem on the scalar equation~\eqref{eq:sigma_fp}. The following lemma provides a sufficient condition for uniqueness.

\begin{lemma}[Sufficient condition for inner-loop uniqueness]\label{lem:uniqueness}
The inner loop has a unique fixed point for every $(\alpha^*,K,e)$ if
\begin{equation}\label{eq:uniqueness_cond}
\beta\Delta\cdot\bar f\cdot\tfrac{1}{4}\sup_{\sigma\in(0,1)}\bigl[\ell_H(\sigma)-\ell_L(\sigma)\bigr]^{+} < 1,
\end{equation}
where $\bar f=\sup_c f(c)$ is the peak density of the answering-cost distribution, $\ell_\theta(\sigma)=V_\theta\,\gamma_\theta(\sigma V_\theta)/\Gamma_\theta(\sigma V_\theta)$ is the posting semi-elasticity of type~$\theta$, and $[x]^{+}=\max\{x,0\}$.
\end{lemma}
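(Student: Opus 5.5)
The plan is to treat the inner loop as the scalar fixed point $\sigma=\hat\sigma(\sigma)$ in \eqref{eq:sigma_fp}. Uniqueness will follow from showing that the map $\sigma\mapsto\sigma-\hat\sigma(\sigma)$ is strictly increasing on $[0,1]$. Since $\hat\sigma$ is continuous and maps $[0,1]$ into $[0,1]$, existence is already given by the intermediate value theorem. It therefore suffices to bound the upward slope of $\hat\sigma$ strictly below one wherever $\hat\sigma$ is differentiable. At kinks (the $\max$ in $c^*$ and the $\min$ in $\mu$) I will use one-sided derivatives, or equivalently a Lipschitz-type bound on increases: for $\sigma_1<\sigma_2$, $\hat\sigma(\sigma_2)-\hat\sigma(\sigma_1)<\sigma_2-\sigma_1$.

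First step: factor $\hat\sigma=\mu\cdot F(c^*)$ and sign the two pieces separately. The match probability $\mu=\min\{1,\Psi(\alpha^*)/Q(\sigma)\}$ is weakly decreasing in $\sigma$, because $Q=\pi\Gamma_L(\sigma V_L)(1-a_L)+(1-\pi)\Gamma_H(\sigma V_H)(1-a_H)$ is weakly increasing in $\sigma$. Posting more only congests more, so the congestion channel contributes a non-positive slope. Since $0\le F\le 1$ and $0\le\mu\le 1$, any upward movement of $\hat\sigma$ must come from $F(c^*)$. Hence $\partial^+\hat\sigma\le \mu\, f(c^*)\,\partial^+ c^*\le \bar f\,\beta\Delta\,[\omega'(\sigma)]^+$, using $c^*=\max\{0,\beta\Delta\omega+u-C(K)\}$, so that $\partial c^*\le\beta\Delta[\omega']^+$.

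Second step, the computation that drives the $\tfrac14$: differentiate the $H$-share. Write $A=(1-\pi)q_H$ and $B=\pi q_L$, so $\omega=A/(A+B)$. Then $\omega'=\omega(1-\omega)\,(d\log A/d\sigma-d\log B/d\sigma)$. Because $a_\theta^e(K)$ does not depend on $\sigma$, we have $d\log q_\theta/d\sigma=V_\theta\gamma_\theta(\sigma V_\theta)/\Gamma_\theta(\sigma V_\theta)=\ell_\theta(\sigma)$. Hence $\omega'(\sigma)=\omega(1-\omega)[\ell_H(\sigma)-\ell_L(\sigma)]\le\tfrac14[\ell_H-\ell_L]^+$. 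Combining the two steps gives $\partial^+\hat\sigma(\sigma)\le \beta\Delta\bar f\tfrac14\sup[\ell_H-\ell_L]^+<1$ by \eqref{eq:uniqueness_cond}. So $\sigma-\hat\sigma(\sigma)$ has a strictly positive lower Dini derivative everywhere on $(0,1)$, is strictly increasing, and crosses zero at most once.

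The main obstacle is handling the boundary and degenerate regions rigorously rather than the algebra. There are three of these. At $\sigma=0$ (or wherever $Q=0$), $\ell_\theta$ may blow up and $\omega$ is undefined. Where $\Gamma_\theta(\sigma V_\theta)=0$, $\ell_\theta$ is undefined. At the kinks, $\hat\sigma$ is not differentiable. For the first, I will use the convention in Section~\ref{sec:answering} that $\hat\sigma=0$ when $Q=0$, so that $\sigma=0$ is a fixed point only in the degenerate case; on $(0,1]$, where $Q>0$, the Dini-derivative argument applies. If the paper intends $\sigma=0$ to be excluded or to coincide with the unique fixed point in that degenerate case, I will state this explicitly. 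For the kinks, I will note that $\mu$ and $c^*$ are compositions of monotone Lipschitz pieces, so the upper Dini derivative of $\hat\sigma$ is bounded by the product bound above. A monotonicity lemma for functions whose lower Dini derivative is positive everywhere then gives strict increase of $\sigma-\hat\sigma$.
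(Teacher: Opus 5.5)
Your core argument is the paper's: reduce the inner loop to the scalar map $\hat\sigma$, use that $\mu$ is nonincreasing in $\sigma$ so congestion can only pull the slope down, compute $\omega'=\omega(1-\omega)(\ell_H-\ell_L)\le\tfrac14[\ell_H-\ell_L]^+$, and conclude that $\sigma-\hat\sigma(\sigma)$ is strictly increasing. Your one-sided increment bound is somewhat cleaner than the paper's sketch. It handles the congested and uncongested regimes and the kinks in one step, whereas the paper splits by regime, differentiates through the kinks, and appeals to a ``contraction-type'' bound. That appeal is misleading: $\hat\sigma'$ can be very negative, and what is actually needed is exactly your monotonicity of $\sigma-\hat\sigma$.

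The step that fails is your treatment of $\sigma=0$: it is not a degenerate case. The paper maintains that $\Gamma_\theta$ has a density, so $\Gamma_\theta(0)=0$; this is used in the discussion of Assumption~\ref{ass:phi_shape}(a). Hence at the candidate $\sigma=0$ nobody posts, $Q=0$, the convention gives $\mu=0$, and $\hat\sigma(0)=0$ for every $(\alpha^*,K,e)$. Now let $\sigma\downarrow 0$ with $\Psi(\alpha^*)>0$ and posting densities positive at zero, as in the exponential specification. Then $Q\to 0$, so $\mu\to 1$, and $\omega(\sigma)\to\omega_0>0$, the $H$-share computed from the densities at zero. Thus $\hat\sigma(0^+)=F\bigl(\max\{0,\beta\Delta\omega_0+u-C(K)\}\bigr)$, which is strictly positive on an active platform.

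Consequences:
\begin{itemize}
\item $\hat\sigma$ is discontinuous at $0$, so your IVT-on-$[0,1]$ existence remark relies on continuity that fails there.
\item $\sigma-\hat\sigma$ jumps down from $0$ to $-\hat\sigma(0^+)<0$. Since $1-\hat\sigma(1)\ge 0$ and $\hat\sigma$ is continuous on $(0,1]$, there is a second fixed point in $(0,1]$.
\end{itemize}
The no-posting point is a self-confirming coordination failure that coexists with the interior fixed point. So the lemma as literally stated fails whenever the platform is active, and no argument can deliver it. This applies even in the paper's parametric example, where the condition holds trivially.

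What your proof establishes is that there is at most one fixed point in $(0,1]$: the nontrivial, positive-posting inner-loop equilibrium is unique. The paper's sketch proves the same thing; it asserts existence ``from continuity'' and never addresses $\sigma=0$. You should state the conclusion in that restricted form rather than treat $\sigma=0$ as relevant only in a degenerate case.
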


\begin{proof}[Proof sketch]
In the uncongested regime ($\mu=1$),
\[
\hat\sigma'(\sigma) = f(c^*)\,\beta\Delta\,\omega(\sigma)(1-\omega(\sigma))\bigl[\ell_H(\sigma)-\ell_L(\sigma)\bigr].
\]
Using $\omega(\sigma)(1-\omega(\sigma))\le 1/4$, the condition in~\eqref{eq:uniqueness_cond} ensures $\hat\sigma'(\sigma)<1$ everywhere in this regime.

In the congested regime, the product-rule derivative is
\[
\hat\sigma'(\sigma)=\mu'(\sigma)\,F(c^*(\sigma)) + \mu(\sigma)\,f(c^*(\sigma))\,c^{*\prime}(\sigma).
\]
The first term is weakly negative: higher $\sigma$ raises posted flow $Q$ and reduces $\mu=\Psi/Q$, so $\mu'(\sigma)\le 0$. The second term is bounded by the same logic as the uncongested case. Hence the stated condition ensures $\hat\sigma'(\sigma)<1$ everywhere, and uniqueness follows from the contraction-type bound.
\end{proof}

All three factors in the left-hand side of~\eqref{eq:uniqueness_cond} must be simultaneously large for the sufficient uniqueness condition to fail:

\begin{center}
\begin{tabular}{lp{0.65\textwidth}}
\toprule
\textbf{Term} & \textbf{Why large pushes toward multiplicity} \\
\midrule
$\beta\Delta$ & Answerer valuation of archive expansion is high, so composition strongly affects answering incentives. \\
$\bar f$ & Many answerers sit near the cutoff; small shifts in $c^*$ move a large mass. \\
$\sup[\ell_H-\ell_L]^{+}$ & The maximal posting semi-elasticity gap is large; $H$-type posting is more responsive to $\sigma$, so higher resolution tilts the pool toward $H$. \\
\bottomrule
\end{tabular}
\end{center}

The main text does not impose the condition in Lemma~\ref{lem:uniqueness}; where multiplicity is possible, the results select the Pareto-dominant fixed point (the ``high-quality, high-response'' equilibrium).

\subsection{Economics of multiplicity}
\label{app:multiplicity_econ}

When the inner loop admits multiple stable fixed points, the same fundamentals $(\alpha^*,K,e)$ can support both a higher-resolution outcome---in which the posted pool is $H$-rich, the answering cutoff is high, and resolution is high---and a lower-resolution outcome with the opposite configuration. The higher-resolution branch corresponds to an \emph{engaged} platform configuration in which the composition of posted queries sustains strong answering incentives, while the lower-resolution branch corresponds to a \emph{disengaged} configuration in which the posted pool is $L$-rich, the answering cutoff is low, and fewer posted queries are resolved. The distinction is a within-period coordination phenomenon: which branch is realized depends on expectations about the composition of the posted pool, not on the primitives of the environment.

This within-period coordination fragility interacts with the across-period dynamics of Proposition~\ref{prop:dynamic_crowdout}. Under inner-loop uniqueness, the mapping from $(K,e)$ to current archive creation $h^e(K)$ is single-valued, and the law of motion for $K$ is deterministic. Under multiplicity, the within-period mapping from $(K,e)$ to archive creation can become branch-dependent: the realized archive contribution depends on which inner-loop branch is selected. In particular, a lower-resolution inner-loop outcome can reduce current archive creation even when fundamentals are unchanged, making the platform more fragile than the unique-selection version suggests.

In parameterizations where the fixed-point map $\hat\sigma(\sigma)$ becomes S-shaped, one may obtain two stable branches separated by an unstable threshold, which would imply tipping, branch loss, or hysteresis. We leave a full bifurcation analysis to future work. Note that the baseline parametric example in Appendix~\ref{app:parametric} satisfies the sufficient uniqueness condition of Lemma~\ref{lem:uniqueness} trivially, so the paper's numerical exercises remain on the smooth-adjustment side of the theory.

\subsection{Type-independence of answering costs}
\label{app:type_independence}

The baseline model draws answering costs from a single distribution $c\sim F$ on $[0,\bar c]$, independent of the query's type $\theta\in\{L,H\}$. This subsection discusses the interpretation and an extension with type-indexed costs.

The type-independence assumption does not claim that $H$-type queries are equally easy to answer in absolute terms. It claims that conditional on a random match between agent and query, the match-quality distribution does not depend on the $L/H$ classification. The platform assigns queries to agents without conditioning on type (agents do not observe type at the time of matching), so the relevant heterogeneity in public resolution is idiosyncratic match quality, not problem difficulty. Type-dependent hardness is already captured by residual difficulty $r$ on the private-resolution side: $H$-type queries have a difficulty distribution $G_H$ that governs whether they can be resolved privately, and only those that survive private resolution reach the platform. Conditional on reaching the platform, the match-quality draw $c$ reflects how well the specific agent's knowledge fits the specific question, which is plausibly orthogonal to the coarse $L/H$ classification. A niche question may have high $r$ (AI cannot handle it) but draw low $c$ for an agent who happens to have the right expertise.

Allowing type-dependent on-platform cost distributions $F_L$ and $F_H$ (with, say, $F_H$ first-order stochastically dominated by $F_L$ to capture that $H$-type queries are harder to answer) requires specifying the information structure at the time of answering. Under the baseline assumption that matched agents do not observe query type, type-dependent costs create an inference problem: if $F_L\neq F_H$, the realized cost draw $c$ is informative about the query's type, and the agent's decision becomes a Bayesian problem of the form
\[
\text{answer iff}\quad u-C(K)-c+\beta\Delta\cdot\Pr(H\mid c,\,\text{posted pool})\ge 0,
\]
where $\Pr(H\mid c,\,\text{posted pool})$ is the posterior belief that the matched query is type~$H$, updated from the prior (the posted-pool $H$-share) using the likelihood ratio $f_H(c)/f_L(c)$. The acceptance region may not be a simple threshold in $c$, and the equilibrium fixed-point structure becomes substantially more complex. This hidden-type extension is not pursued here.
 
A simpler variant instead assumes that matched agents observe the query's type before deciding whether to answer. Under this observable-type information structure, the answering cutoffs are
\[
c_L^{*,e}(K)=\max\{0,\,u-C(K)\},\qquad c_H^{*,e}(K)=\max\{0,\,\beta\Delta+u-C(K)\},
\]
since the agent knows the increment is $0$ for an $L$-type query and $\Delta$ for an $H$-type query. The type-dependent cost distributions $F_L$ and $F_H$ affect resolution probabilities through $F_L(c_L^{*,e}(K))$ and $F_H(c_H^{*,e}(K))$ but do not enter the cutoff formulas themselves. Because the agent knows the query type in this variant, the baseline object $\bar\Delta^e(K)$ (the expected increment from a randomly drawn posted query) no longer governs individual answering decisions. One may still define the average knowledge increment per resolved query as a platform-level descriptive object,
\[
\widetilde\Delta_{\mathrm{res}}^{e}(K)=\frac{(1-\pi)\,q_H^e(K)\cdot F_H(c_H^{*,e}(K))\cdot\Delta}{\pi\,q_L^e(K)\cdot F_L(c_L^{*,e}(K))+(1-\pi)\,q_H^e(K)\cdot F_H(c_H^{*,e}(K))},
\]
but this is an aggregate accounting identity, not the decision-relevant object for the matched agent. Since the observable-type variant changes the information structure relative to the main model and is orthogonal to the paper's principal mechanisms (posted-flow diversion and participation-driven congestion), it is not pursued in the main text.

\subsection{Ability-dependent private resolution}
\label{app:ability_dependent}

The baseline model specifies private-resolution rates $a_\theta^e(K)$ as independent of agent ability $\alpha$. In practice, higher-ability agents likely extract more from AI tools, so the private-resolution rate should depend on ability: $a_\theta^e(K,\alpha)$ increasing in $\alpha$. This subsection shows that the extension is relatively benign for the model's main comparative statics.

Under the extension, the aggregate posted flow becomes
\[
q_\theta^e(K) = m_\theta^e(K)\cdot\bar A_\theta^e(K),
\]
where
\[
\bar A_\theta^e(K) = \int\bigl(1-a_\theta^e(K,\alpha)\bigr)\,d\Psi(\alpha)
\]
replaces the scalar $(1-a_\theta^e(K))$. All downstream objects ($Q^e$, $\bar\Delta^e$, $c^{*,e}$, $\sigma^e$) are computed from $(q_L,q_H)$ as before.

The key results carry over as follows:
\begin{itemize}
\item \emph{Lemma~\ref{lem:composition} (composition).} The sufficient condition becomes $\bar A_H^{\mathrm{AI}}(K)/\bar A_H^{\mathrm{HO}}(K) > \bar A_L^{\mathrm{AI}}(K)/\bar A_L^{\mathrm{HO}}(K)$, with the same economic interpretation: AI diverts $H$-type queries proportionally less than $L$-type.
\item \emph{Proposition~\ref{prop:resolution} (resolution race).} The three-way factorization retains the same structure, with the objects evaluated at the new averages $\bar A_\theta^e$.
\item \emph{Propositions~\ref{prop:dynamic_crowdout}--\ref{prop:shrinkage} (dynamic crowd-out, multiple steady states).} The continuity and boundary-behavior logic appears to survive because $\bar A_\theta^e(K)$ inherits the continuity and monotonicity of the individual $a_\theta^e(K,\alpha)$. Any proof step relying on the exact functional form of $q_\theta^e(K)$ should be rechecked explicitly.
\item \emph{Proposition~\ref{prop:conversion} (conversion).} The conversion term becomes $\bar a_H^{\mathrm{AI}}(K)=\int a_H^{\mathrm{AI}}(K,\alpha)\,d\Psi(\alpha)$.
\item \emph{Participation surplus $S^e(K)$.} Remains $\alpha$-independent as long as the answering game does not condition on who posted the query and matched contributors do not observe the poster's ability.
\end{itemize}

The structural reason is that ability-dependent private resolution changes the \emph{user-side posting margin}, not the \emph{contributor-side answering game}. If the match-quality draw $c$ remains independent of the poster's ability, the answering game remains symmetric across participants.

The extension creates a correlation between the flow and resolution margins: the same high-$\alpha$ users whose outside option strengthens under AI are also the ones more likely to resolve queries privately. This can amplify the flow-margin response quantitatively, since $\bar A_\theta^{\mathrm{AI}}$ is a weighted average that puts more weight on the ability levels with higher private-resolution rates. This is a quantitative amplification, not a change in the sign logic of the model's main comparative statics.

\section{Parametric Example}\label{app:parametric}

The following parametric example is illustrative, not calibrated: its purpose is to verify that the conditions of Assumption~1 can be satisfied simultaneously by explicit functional forms and to exhibit the hump shape of $\phi^{\mathrm{AI}}(K)$ in a concrete instance of the model.

All primitives are specified with standard closed-form distributions.  The outside-option payoff is linear in ability, archive stock, and the routine-task share:
\begin{align}
  w^{e}(\alpha,K;\pi) &= \alpha\,(1+\gamma_w^{e} K)\,(1+\pi\,\delta_w^{e}), \label{eq:param_w} \\
  C(K) &= \frac{\bar C}{1+\kappa\, K}, \label{eq:param_C}
\end{align}
so that $C'(K)<0$ and $w^{e}$ is increasing in $\alpha$ and $K$ (strictly so when $\gamma_w^{e}>0$).  The answering cost is uniform, $F=\mathrm{Uniform}[0,\bar c]$, yielding $F(c)=c/\bar c$ and expected answering surplus $\Pi=(c^{*})^2/(2\bar c)$ when $c^{*}\le\bar c$ (which holds throughout the active region of this example).  The ability distribution is uniform, $\Psi=\mathrm{Uniform}[\underline\alpha,\bar\alpha]$.  Private resolution probabilities are exponential in the archive: $a_H^{e}(K)=1-e^{-\rho^{e} K}$ and $a_L^{e}(K)=1-e^{-\rho_L^{e} K}$ with $\rho_L^{e}>\rho^{e}$, so that routine queries are resolved privately at a higher rate.  The posting-cost distribution is exponential, $\Gamma_\theta(x)=1-e^{-x/\bar d}$, with density on $[0,\infty)$ and mean~$\bar d$.  The cost function $C(K)$, the distributions $F$, $\Psi$, and $\Gamma_\theta$, and the posting values $V_H$, $V_L$ are environment-independent; only the outside-option parameters $(\gamma_w^{e},\delta_w^{e})$ and the private resolution rates $(\rho^{e},\rho_L^{e})$ differ across environments. Setting $\gamma_w^{\mathrm{HO}}=\delta_w^{\mathrm{HO}}=0$ is a simplification, not a knife-edge: perturbing both to $0.01$ changes the peak of $\phi^{\mathrm{HO}}$ by less than $0.005$ and leaves the crossing points unchanged to two decimal places.

\begin{table}[h]
\centering
\footnotesize
\caption{Parameter values for the parametric example.}\label{tab:parametric}
\begin{tabular}{@{}llcllc@{}}
\toprule
Symbol & Description & Value & Symbol & Description & Value \\
\midrule
$\lambda$ & Depreciation rate & 0.15 & $\underline\alpha$ & Ability lower bound & 0.001 \\
$\pi$ & Routine-task share & 0.4 & $\bar\alpha$ & Ability upper bound & 0.2 \\
$\Delta$ & Knowledge increment & 1.0 & $\gamma_w$ & Outside-option $K$-sensitivity (AI) & 0.5 \\
$\beta$ & Archive concern & 0.9 & $\delta_w$ & Outside-option $\pi$-sensitivity (AI) & 0.5 \\
$u$ & Private answering benefit & 0.3 & $\gamma_w^{\mathrm{HO}}$ & Outside-option $K$-sensitivity (HO) & 0.0 \\
$\bar C$ & Answering cost at $K=0$ & 1.25 & $\delta_w^{\mathrm{HO}}$ & Outside-option $\pi$-sensitivity (HO) & 0.0 \\
$\kappa$ & Cost decay rate & 5.0 & $\rho$ & $H$-type private resolution rate (AI) & 0.5 \\
$\bar c$ & Answering cost support & 1.0 & $\rho_L$ & $L$-type private resolution rate (AI) & 1.0 \\
$V_H$ & Private value of $H$-type answer & 2.0 & $\rho^{\mathrm{HO}}$ & $H$-type private resolution rate (HO) & 0.1 \\
$V_L$ & Private value of $L$-type answer & 1.0 & $\rho_L^{\mathrm{HO}}$ & $L$-type private resolution rate (HO) & 0.3 \\
$\bar d$ & Mean posting cost & 0.5 \\
\bottomrule
\end{tabular}
\end{table}

Table~\ref{tab:parametric} reports the parameter values.  The maintained assumption $w^{\mathrm{AI}}(\alpha,K;\pi)\ge w^{\mathrm{HO}}(\alpha,K;\pi)$ holds for all $(\alpha,K)$ because $\gamma_w\ge\gamma_w^{\mathrm{HO}}$ and $\delta_w\ge\delta_w^{\mathrm{HO}}$.  The period equilibrium conditions (E1)--(E8) are solved numerically on a dense grid of $K$ values by a nested fixed-point algorithm: an inner loop iterates on the posted flows $(q_L,q_H)$ at given participation cutoff~$\alpha^*$ until convergence, and an outer loop finds $\alpha^*$ by Brent's method on the residual $S^{e}(K)-w^{e}(\alpha^*,K;\pi)=0$.  Knowledge creation is $h^{e}(K;0)=\Delta(1-\pi)\,q_H^{e}(K)\,\sigma^{e}(K)$ at conversion rate $\eta=0$.  Under the shared exponential posting-cost specification, the sufficient inner-loop uniqueness condition in Appendix~\ref{app:multiplicity} (Lemma~\ref{lem:uniqueness}) is satisfied trivially: the posting semi-elasticity gap $\ell_H(\sigma)-\ell_L(\sigma)$ is strictly negative for all $\sigma>0$, so the self-reinforcing composition feedback is absent and the inner loop has a unique fixed point at every $(K,e)$.

Figure~\ref{fig:parametric} (in Section~\ref{sec:steady_states}) displays the resulting average creation rates $\phi^{\mathrm{HO}}(K)\equiv h^{\mathrm{HO}}(K)/K$ and $\phi^{\mathrm{AI}}(K)\equiv h^{\mathrm{AI}}(K;0)/K$.  The three conditions of Assumption~1 are verified for the AI environment as follows.  Condition~(a) holds because $C(0)=\bar C=1.25$ exceeds $\beta\Delta+u=1.20$; since $\bar\Delta^{\mathrm{AI}}(K)\le\Delta$ for all~$K$, the answering cutoff satisfies $c^{*,\mathrm{AI}}(K)\le\max\{0,\,\beta\Delta+u-C(K)\}$, so $c^{*,\mathrm{AI}}(0)=0$, which implies $\sigma^{\mathrm{AI}}(0)=0$ and $\phi^{\mathrm{AI}}(0)=0$.  Condition~(b) holds because $\phi^{\mathrm{AI}}$ peaks at approximately $0.49$ near $K\approx 0.51$, well above $\lambda=0.15$ (the peak-to-$\lambda$ ratio is approximately $3.2$), and crosses $\lambda$ twice: from below at $K_U^{\mathrm{AI}}\approx 0.15$ and from above at $K_H^{\mathrm{AI}}\approx 1.55$.  Condition~(c) holds because $w^{\mathrm{AI}}(\underline\alpha,K;\pi)=\underline\alpha\,(1+\gamma_w K)(1+\pi\delta_w)$ grows without bound in~$K$ (since $\gamma_w>0$), so that the participation surplus $S^{\mathrm{AI}}(K)$, which is bounded above by $\max\{(c^{*})^2/(2\bar c),\,c^{*}-\bar c/2\}\le\beta\Delta+u$, is eventually dominated by the outside option for every agent, driving $\Psi(\alpha^{*,\mathrm{AI}}(K))$ to zero and $\phi^{\mathrm{AI}}(K)$ to zero for large~$K$.

Under the human-only environment, $\phi^{\mathrm{HO}}$ peaks at approximately $0.59$ near $K\approx 0.47$ and crosses $\lambda$ twice: from below at $K_U^{\mathrm{HO}}\approx 0.14$ and from above at $K^{\mathrm{HO}}\approx 2.64$, so the human-only economy also has a structural minimum viable archive of similar magnitude to the AI economy ($K_U^{\mathrm{HO}}\approx K_U^{\mathrm{AI}}$), confirming that the empty-archive shutdown is environment-independent.  The stable steady state $K^{\mathrm{HO}}$ is roughly $70\%$ larger than $K_H^{\mathrm{AI}}$, and the viable region ($K_U^{\mathrm{HO}}$ to $K^{\mathrm{HO}}$, width $\approx 2.50$) is substantially wider than under AI ($K_U^{\mathrm{AI}}$ to $K_H^{\mathrm{AI}}$, width $\approx 1.41$).  Because $\gamma_w^{\mathrm{HO}}=0$ and the private resolution rates $\rho^{\mathrm{HO}},\rho_L^{\mathrm{HO}}$ are lower than their AI counterparts, the outside option does not grow with the archive and queries are resolved privately at a slower rate, so the hump is taller and wider than under AI.  The comparison confirms that the self-undermining feedback introduced by AI compresses the hump and pulls $K_H$ inward, consistent with the mechanism described in Section~\ref{sec:steady_states}.

The hump shape arises from the interaction of two forces.  At low~$K$, the answering cost $C(K)$ is high, suppressing the answering cutoff and hence knowledge creation; as $K$ rises and $C(K)$ falls, $c^{*,e}$ and $\sigma^{e}$ increase, generating the ascending limb of the hump.  At high~$K$, private resolution rates $a_\theta^{e}(K)$ approach one and (under AI) the outside option $w^{\mathrm{AI}}$ rises, so that fewer queries are posted and fewer agents participate; the resulting decline in $h^{e}$ outpaces the growth in $K$, producing the descending limb.  The two steady states $K_U^{\mathrm{AI}}$ and $K_H^{\mathrm{AI}}$ correspond to the intersections of $\phi^{\mathrm{AI}}$ with~$\lambda$, as characterized by Proposition~\ref{prop:shrinkage}.

Table~\ref{tab:sensitivity} reports a one-at-a-time sensitivity analysis around the baseline, varying $\gamma_w$ (outside-option $K$-sensitivity), $\kappa$ (answering cost decay rate), and $\rho$ ($H$-type private resolution rate) over the ranges $\{0.3,0.5,0.7\}$, $\{3.0,5.0,7.0\}$, and $\{0.3,0.5,0.7\}$ respectively. All six variations preserve the two-crossing structure: the hump shape that generates multiple steady states under Assumption~\ref{ass:phi_shape} is not a knife-edge of the baseline calibration. In this example, the two-crossing structure is most sensitive to $\kappa$, which governs how fast $C(K)$ falls and hence where the platform activates; it is moderately sensitive to $\rho$, which controls how fast private resolution erodes posted flow and which has its largest effect on the stable steady state $K_H^{\mathrm{AI}}$; and it is least sensitive to $\gamma_w$, whose effect on participation operates indirectly through a longer chain of intermediate links (outside options $\to$ participation cutoff $\to$ contributor pool $\to$ match probability). Because $\gamma_w$ and $\rho$ are AI-specific parameters, the human-only steady state $K^{\mathrm{HO}}$ changes only when $\kappa$ varies (since $C(K)$ is environment-independent). These rankings are specific to the chosen functional forms and parameter ranges; different specifications for $C(K)$, $a_H^{e}(K)$, or $w^{e}$ could alter the relative sensitivity.

\begin{table}[t]
\centering
\footnotesize
\caption{Sensitivity of the parametric example to one-at-a-time variation of
$\gamma_w$, $\kappa$, and $\rho$ around the baseline of Table~\ref{tab:parametric}.}
\label{tab:sensitivity}
\begin{tabular}{clcccccc}
\toprule
Run & Varied parameter & Value & $K_U^{\mathrm{AI}}$ & $K_H^{\mathrm{AI}}$ & Peak $\phi^{\mathrm{AI}}$ & $K^{\mathrm{HO}}$ & Two crossings? \\
\midrule
0 & --- & --- & 0.15 & 1.55 & 0.49 & 2.64 & Yes \\
1 & $\gamma_w$ & 0.30 & 0.15 & 1.55 & 0.51 & 2.64 & Yes \\
2 & $\gamma_w$ & 0.70 & 0.15 & 1.55 & 0.46 & 2.64 & Yes \\
3 & $\kappa$ & 3.00 & 0.31 & 1.44 & 0.29 & 2.47 & Yes \\
4 & $\kappa$ & 7.00 & 0.09 & 1.60 & 0.67 & 2.71 & Yes \\
5 & $\rho$ & 0.30 & 0.14 & 2.07 & 0.55 & 2.64 & Yes \\
6 & $\rho$ & 0.70 & 0.15 & 1.25 & 0.43 & 2.64 & Yes \\
\bottomrule
\end{tabular}
\end{table}

\end{document}